\newcommand{\defn}[1]{\textbf{\textup{#1}}}
\newtheorem*{convention*}{Convention}
\newclass{\pFO}{pFO}
\newclass{\Comp}{Comp}
\newclass{\RTIME}{DTIME^R}
\newcommand{\para}[1]{\textup{para-}#1}
\renewcommand{\p}[1]{\ComplexityFont{p}#1}
\newcommand{\up}{\uparrow}
\DeclareMathOperator{\Para}{Para}
\newcommand{\gate}[1]{\textup{\textsc{#1}}}
\def\gand/{\gate{and}}
\def\gor/{\gate{or}}
\def\gnot/{\gate{not}}
\DeclareMathOperator{\bit}{Bit}
\newcommand{\N}{\mathbb N}
\renewcommand{\implies}{\mathbin{\rightarrow}}
\renewcommand{\iff}{\mathbin{\leftrightarrow}}
\DeclareMathOperator{\double}{\delta} \DeclareMathOperator{\fanin}{In} 
\title{Uniformity within Parameterized Circuit Classes}
\author{Steef Hegeman}{Leiden University, Netherlands}{s.hegeman@liacs.leidenuniv.nl}{0009−0009−0368−8502}{}
\author{Jan Martens}{Leiden University, Netherlands}{j.j.m.martens@liacs.leidenuniv.nl}{0000-0003-4797-7735}{}
\author{Alfons Laarman}{Leiden University, Netherlands}{a.w.laarman@liacs.leidenuniv.nl}{0000-0002-2433-4174}{}
\authorrunning{S. Hegeman, J.\,J.\,M. Martens, and A.\,W. Laarman}
\keywords{Parameterized complexity, circuit complexity, uniformity, descriptive complexity}
\begin{document}

\maketitle

\begin{abstract}
We study uniformity conditions for parameterized Boolean circuit families.
Uniformity conditions require that the infinitely many circuits in a circuit family are in some sense easy to construct from one shared description.
For shallow circuit families,
logtime-uniformity is often desired but quite technical to prove.
Despite that, proving it is often left as an exercise for the reader
-- even for recently introduced classes in parameterized circuit complexity,
where uniformity conditions have not yet been explicitly studied.
We formally define parameterized versions of linear-uniformity, logtime-uniformity, and FO-uniformity, and
prove that these result in equivalent complexity classes
when imposed on $\para\AC^0$ and $\para\AC^{0\uparrow}$.
Overall, we provide a convenient way to verify uniformity for shallow
parameterized circuit classes, and thereby substantiate claims of uniformity in
the literature.
\end{abstract}

\newpage 

\section{Introduction}\label{sec:introduction} 
Circuit complexity theory analyzes problems through the
(smallest) size and depth of Boolean circuits that solve them,
linking logic, space complexity, and parallel complexity theory \cite{vollmer_introduction_1999}.
Because a Boolean circuit has a fixed number of inputs, the actual object of
study is a circuit family, which contains a circuit for every input length $n$.

Without restrictions, circuit families are too expressive to be used as a model of computation, as non-uniform
circuit families can even express non-computable functions. Hence,
uniformity conditions are often imposed to ensure that the circuits in the
family are in some sense easy to construct.
With uniformity conditions,
circuit families are a useful model in the study of,
amongst others,
parallel
and descriptive complexity theory \cite{vollmer_introduction_1999}.

The main idea is that, for a class of circuit families $\mathcal C$,
the structure of a family's circuits
should be no more complex than the problems that families in $\mathcal C$ can solve
-- preferably, strictly less complex.
For the class of constant-depth circuit families,
this line of thinking has led to the very restrictive logtime-uniformity condition,
requiring that the connection language of the circuits in a family must be (simultaneously) decidable in logarithmic time.

Logtime computations are difficult to analyze or work with~\cite{barrington_uniformity_1990,vollmer_introduction_1999}.
Therefore, logtime-uniformity is rarely proved directly,
but instead shown using results such as those of
Barrington, Immerman, and Straubing \cite{immerman_descriptive_1999}.
They show that if a constant-depth circuit family is describable in terms of first-order logic,
there is another, logtime-uniform constant-depth circuit family equivalent with it.
Using this correspondence, one can show the existence of a logtime-uniform circuit family
without directly dealing with the technical intricacies of logtime computations.

Recently, the interest in circuit complexity was renewed in the setting of
parameterized
complexity~\cite{bannach_fast_2015,bannach_parallel_2017,chen_slicewise_2017,pilipczuk2018parameterized,chen_lower_2019}.
In parameterized complexity theory, problems are classified on a finer scale with respect to a
parameter instead of just their size. 
In the context of circuit complexity this spawns new circuit classes, that
identify problems which are in some sense parallel constant time decidable up-to some
parameter~\cite{bannach_parallel_2017}.  

Parameterized circuit classes require new parameterized uniformity conditions.
Recent works \cite{bannach_fast_2015,bannach_parallel_2017,chen_slicewise_2017,pilipczuk2018parameterized,chen_lower_2019}
have adapted the logtime-uniformity of Barrington et al.,
though they are not very specific in their exact definitions.
They generally refer to the complexity of computing single bits in the binary encoding of circuits,
but it is not specified what the binary encoding is.
Additionally, though the literature cites \cite{barrington_uniformity_1990},
it is not clear if the equivalence results of Barrington et al.
carry over to the parameterized circuit classes.

In general, the literature often leaves uniformity claims as an exercise for the reader.
To the best of our knowledge, only \cite[Theorem 3.2, Appendix A]{bannach_fast_2015}
proves one of its uniformity claims.
However, the proof appears to erroneously assume that logtime-uniformity
gives a random-access Turing machine polylogarithmic time to
decide the graph structure of the circuits,
while it allows only logarithmic time.

\begin{table}[hb]
	\caption{
		Informal overview of the parameterized uniformity notions used in the paper.
	}\label{tab:uniformity-notions-rough}
	\centering
	\begin{tabular}{@{}llll@{}} \toprule
		Condition & Question (for circuit $C_{n,k}$) & Decision complexity \\ \midrule
		linear-BD-uniform
			& $\langle \text{local question}, n, k \rangle$
			& $\DTIME(\log n + f(k))$ \\
		logtime-D-uniform
			& $\langle \text{local question}, \{0,1 \}^n, \{0,1 \}^k \rangle$
			& $\RTIME(\log n + f(k))$ \\
		logtime-E-uniform
			& $\langle \text{path question}, \{0,1 \}^n, \{0,1 \}^k \rangle$
			& $\RTIME(\log n + f(k))$ \\
		$\FO$-D-uniform
			& $\langle \text{local question}, \{0,1 \}^n, \{0,1 \}^k \rangle$
			& $\para\FO$ in parameter $k$ \\
		$\FO$-E-uniform
			& $\langle \text{path question}, \{0,1 \}^n, \{0,1 \}^k \rangle$
			& $\para\FO$ in parameter $k$ \\
		\bottomrule
	\end{tabular}
\end{table}

\subparagraph*{Contributions.}
We introduce and analyze uniformity conditions for parameterized circuit
complexity. 
Specifically, we define a suitable notion of the logtime-uniformity used in the
literature, and show it to be equivalent to more natural notions when imposed on
the shallow parameterized circuit classes $\para\AC^0$ and $\para\AC^{0\up}$.

An overview of the discussed uniformity notions,
formally defined in \cref{sec:direct-uniformity,sec:extended-uniformity},
can be found in \autoref{tab:uniformity-notions-rough}.
We consider direct uniformity notions, which require the local structure of
circuits to be easily decidable, and extended uniformity notions that also
consider paths in the circuit.
The direct logtime-D-uniformity is based on logtime-DCL-uniformity introduced in \cite{barrington_uniformity_1990}
and used in the reference work \cite{vollmer_introduction_1999}.
We believe our definition captures the more informal notions of parameterized logtime-uniformity found in \cite{bannach_fast_2015,bannach_parallel_2017,chen_slicewise_2017,pilipczuk2018parameterized,chen_lower_2019}.
For the classes $\para\AC^0$ and $\para\AC^{0\up}$,
we show that these notions are equivalent to linear-BD-uniformity,
a parameterized variant of $U_D$-uniformity from \cite{ruzzo_uniform_1981}.
We introduce a parameterized variant of $\FO$-uniformity from~\cite{barrington_uniformity_1990},
and show that
this too is equivalent to our earlier notions when applied to $\para\AC^0$ and $\para\AC^{0\up}$.
Finally, we also show this
for strict adaptations of extended logtime and first-order uniformity.

Our results allow for an easier analysis of uniformity for shallow parameterized circuit classes,
and address uniformity claims in the literature.

\subparagraph*{Outline.}
\cref{sec:preliminaries} contains preliminaries on circuit complexity and known uniformity results.
In \cref{sec:direct-uniformity},
we formally define the direct uniformity conditions roughly described in \autoref{tab:uniformity-notions-rough}.
In \cref{sec:para-ac0-and-para-ac0^},
we show that imposing these different conditions leads to the same complexity class,
both for $\para\AC^0$ and $\para\AC^{0\up}$.
In \cref{sec:extended-uniformity},
we show that this also holds for strict adaptations of extended uniformity conditions.
In \cref{sec:descriptive},
we show how our work can be related to the descriptive complexity-based work on uniformity for $\AC[t(n)]$.

\section{Preliminaries}\label{sec:preliminaries}
In this section,
unless otherwise noted,
standard definitions in circuit complexity
and parameterized complexity are based on the reference works
\cite{vollmer_introduction_1999} and \cite{flum_parameterized_2006} respectively.

\subsection{Circuit families}
A \defn{circuit family} is  a set of (Boolean) circuits $\{ C_n : n \in \N \}$
such that $C_n$ has $n$ input gates (fan-in 0) and at least one output gate (fan-in 1, fan-out 0).
The remaining (\defn{internal}) gates are \gand/, \gor/, or \gnot/ gates.
A circuit's \defn{size} is the number of gates in it,
and its \defn{depth} the length of the longest (input to output) path it contains.
The size and depth of a circuit family $C$ are functions $s, d$,
where $s(n),d(n)$ are equal to the size and depth of $C_n$ respectively.

We generally care about circuit families in which each circuit has one output
gate. We then define the language recognized by the circuit family $C$ as
$\{ x \in 2^* : C_{|x|}(x) = 1 \}$. That is, given some binary string $x$, we
take the circuit $C_{|x|}$ with $|x|$ input gates, set the input gates to the
bits of $x$, and evaluate it. The word is accepted if and only if the output
gate evaluates to 1. More generally, a circuit with $n$ inputs and $m$ outputs
induces a binary function $2^n \to 2^m$. Circuit families induce a function $2^* \to
2^*$.

The \defn{circuit class} $\AC^0$ consists of all languages that are recognized
by polynomially sized circuit families of constant depth.
That is, $Q$ is in $\AC^0$ if there is a circuit family $C$
with size $s(n) = O(n^c)$ and depth $d(n) = O(1)$,
such that $Q = \{ x : C(x) = 1 \}$.
It is important to note that all gates have unbounded fan-out,
and the \gand/ and \gor/ gates unbounded fan-in.

\subsection{Parameterized complexity}
Parameterized complexity theory studies computational problems modulo (pre)computations on some input-based parameter.
The parameter function is considered part of the problem.

\begin{definition}
	A \defn{parameterized problem} $(Q, \kappa)$
	consists of a language $Q \subseteq 2^*$ and a
	polynomial-time computable
	\defn{parameter function} $\kappa: 2^* \to \N$.
\end{definition}

We follow the literature on shallow parameterized circuit complexity \cite{elberfeld_space_2015,bannach_fast_2015}
in additionally requiring $\kappa$ to be $\FO$-computable \cite[Definition 2.5]{immerman_descriptive_1999},
equivalently that $\kappa$ can be computed by a logtime-D-uniform circuit family of polynomial size and constant depth
(cf. \cref{def:logtime-D-uniform}).
This requirement can be removed by modifying the definition of $\para\FO$ to being given $f(\kappa(x))$ as a precomputed value
instead of having to recognize it,
as discussed in \autoref{sec:kappa-fo}.

Parameterized classes are often defined in terms of non-parameterized classes.

\begin{definition}[\cite{flum_describing_2003}]\label{def:Para}
	Let $\mathcal X$ be a complexity class.
	A parameterized problem $(Q,\kappa)$ is in $\Para(\mathcal X)$
	if and only if
	there is a computable function $f$ with
	$\{ \langle x, f(\kappa(x)) \rangle : x \in Q \}$ in $\mathcal X$.
\end{definition}

Classes such as parameterized linear time can also be given more explicitly.

\begin{definition}\label{def:para-dtime}
	We write $\DTIME(t(n) + f(k))$ for the class of parameterized problems $(Q, \kappa)$
	for which there is a multitape Turing machine that solves
	$x \in Q$ in $O(t(|x|) + f(\kappa(x)))$ time.
\end{definition}

The difference in style between these definitions lies in whether the precomputation is part of the input or given implicitly by means of extra time.
In \cite{flum_describing_2003},
it is shown that $(Q,\kappa)$ is in $\DTIME(t(n) + f(k))$ for some computable $f$
if and only if $(Q,\kappa)$ is in $\para\DTIME(t(n))$,
but there are no such results for the sublinear random-access Turing machines we use later.

\subsection{Parameterized circuit complexity}
Parameterized circuit complexity studies classes $\Para(\mathcal X)$
for ordinary circuit complexity classes $\mathcal X$,
as well as other classes defined in terms of parameterized circuit families.

\begin{definition}[\cite{bannach_fast_2015}]
	A \defn{parameterized circuit family} $C$
	is a set $\{ C_{n,k} : n,k \in \N \}$ of circuits,
	such that each $C_{n,k}$ has $n$ input gates.
	A parameterized problem $(Q,\kappa)$ is decided by $C$
	if $x \in Q \Longleftrightarrow C_{|x|,\kappa(x)}(x) = 1$ holds.
	Size and depth are now functions $s(n,k)$ and $d(n,k)$.
\end{definition}

The parameterized circuit class $\para\AC^0$
is defined as the parameterized problems solved by constant-depth parameterized circuit families
of size $O(f(k)n^c)$, where $f$ is again some computable function that varies per circuit family.
The class $\para\AC^{0\up}$ is defined in terms of $O(f(k)n^c)$-sized circuits
with depth $O(f(k))$ \cite{bannach_fast_2015}.
$\Para(\AC^0)$ is equal to $\para\AC^0$ \cite{elberfeld_space_2015}.

\subsection{Uniformity}\label{subsec:uniformity}
Uniformity conditions for shallow circuit classes are defined in terms of
connection languages.
These languages express the connections between the gates in the circuit family.
To achieve this goal,
the gates in each circuit are first identified by a numbering.

\begin{definition}\label{def:numbering}
	An \defn{admissible numbering} of a circuit family
	$C = \{ C_n : n \in \N \}$
	of size $s(n)$
	assigns a number to every gate in $C$ so that:
	\begin{enumerate}
		\item within each $C_n$, no two gates have the same number,
		\item the $n$ input gates of $C_n$ are numbered from 0 to $n-1$,
		\item the $m$ output gates of $C_n$ are numbered from $n$ to $n+m-1$, and,
		\item gates in $C_n$ are numbered polynomially in $s(n)$,
			that is, with numbers of $O(\log s(n))$ bits.
	\end{enumerate}
\end{definition}

To improve readability, we will assume that circuit families come with fixed
admissible numberings (which we may choose freely),
and often identify gates with their gate number.

\begin{definition}[\cite{barrington_uniformity_1990,ruzzo_uniform_1981}]\label{def:direct-connection-languages}
	Let $C$ be a circuit family.
	The \defn{direct connection language} $L_D(C)$
	consists of all strings of the form
	$\langle G, a, p, z \rangle$
	where, writing $n$ for $|z|$:
	\begin{enumerate}
		\item $G$ is a gate in $C_n$;
		\item if $p = \epsilon$
			then $a < 3$ indicates the type of $G$
			(one of \gand/, \gor/, \gnot/);
		\item otherwise, the $p$th gate feeding into $G$ is $a$.
	\end{enumerate}

	The \defn{binary direct connection language} $L_{BD}(C)$
	consists of all strings of the form
	$\langle G, a, p, n \rangle$
	satisfying the same conditions,
	now with $n$ written in binary.	That is:
	\[
		L_{BD}(C) = \{\langle G,a,p,\lvert z\rvert \rangle : \langle G, a,p,z \rangle \in L_{D}(C)\}.
	\]
\end{definition}

There is also an extended connection language $L_E(C)$,
wherein $p$ can also be a path.
We will come back to this in \autoref{sec:extended-uniformity}.
The encoding of the lists $\langle \dots,  \dots \rangle$ in \cref{def:direct-connection-languages}
is addressed in \cref{subsec:lists}.
First, we introduce three direct uniformity notions,
defined in terms of
the decision complexity of direct connection languages.

\subparagraph*{Linear-time.}
In \cite{ruzzo_uniform_1981},
Ruzzo introduced
uniformity conditions for shallow circuit classes.
The following definition is based on Ruzzo's $U_D$-uniformity.\footnote{
	Ruzzo requires $\DTIME(\log s(n))$ on a slightly different language,
	but this is equivalent for our cases.
}

\begin{definition}
	A circuit family $C$ is
	\defn{linear-BD-uniform}
	if $L_{BD}(C)$ is in $\DTIME(n)$.
\end{definition}

That is,
a deterministic multitape Turing machine can decide $w \in L_{BD}(C)$ in time $O(|w|)$.

\subparagraph*{Logtime.}
Later uniformity notions were based on random-access Turing machines.
A \defn{random-access Turing machine} \cite[Appendix~A4]{vollmer_introduction_1999}
is a multitape Turing machine
equipped with a special query tape and query state.
Whenever the machine enters the query state,
it obtains the value of the $a$th bit on the input tape,
where $a$ is the binary address written on the query tape,
or an out-of-bounds response if $a$ is beyond the length of the input (or not a number).
Importantly,
the query tape is not erased on query,
and the input tape is read-only.

We write $\RTIME(t(n))$ for the class of problems that can be solved in $O(t(n))$ time by a random-access Turing machine,
and define $\RTIME(t(n) + f(k))$ analogously to \autoref{def:para-dtime}.

\begin{definition}[\cite{barrington_uniformity_1990}]\label{def:logtime-D-uniform}
		Circuit family $C$ is \defn{logtime-D-uniform} if $L_D(C)$ is in $\RTIME(\log n)$.
\end{definition}

Logtime-D-uniformity is arguably the most well-known uniformity condition for $\AC^0$.
Yet not much is known about $\RTIME(\log n)$.
Clearly $\RTIME(\log n) \subseteq \DTIME(n\log n)$ \cite{regan_gap-languages_1997},
but it is to the best of our knowledge unknown whether this is strict.
As far as we know,
it is for example
also open whether a random-access Turing machine can multiply two $m$-bit numbers in $O(m)$ time.
What is known is that they can add $m$-bit numbers in $O(m)$ time,
and determine the length of their input in $O(\log n)$ time \cite[Lemma 6.1]{barrington_uniformity_1990}.

\subparagraph*{First-order definable.}
The class $\FO$ consists of the languages that are first-order definable in the predicates $\leq$ and $\bit$.
That is, they are defined by a first-order sentence using symbols from
$\{ \forall, \land, \lnot, \leq, \bit, X \}$
as follows \cite{barrington_uniformity_1990,immerman_descriptive_1999}.
A word $w$ induces a first-order structure $S_w$ whose domain consists of the numbers 0 to $|w|$,
where:
$\leq$ is interpreted as the standard order on natural numbers,
$\bit(i,j)$ is true if and only if the $i$th bit of $j$ is 1,
and $X(i)$ is true if and only if the $i$th bit of $w$ is 1.
A sentence $\phi$ then defines the language $\{ w : S_w \models \phi \}$.

We may assume that the language also contains ternary predicates for addition and multiplication,
as they are first-order definable in $\leq$ and $\bit$ \cite{barrington_uniformity_1990,immerman_descriptive_1999}.
We may also assume that it contains a constant symbol for the length of the input,
and that we may quantify over (the binary representations of) numbers below $O(|w|^c)$,
by using $O(c)$ variables \cite[p. 293]{barrington_uniformity_1990}.

\begin{definition}[\cite{barrington_uniformity_1990}]
	A circuit family $C$ is \defn{\FO-D-uniform} if $L_D(C)$ is in $\FO$.
\end{definition}

For polynomially-sized circuit families,
it can be shown that linear-BD-uniformity implies logtime-D-uniformity,
which in turn implies \FO-D-uniformity.
\FO-D-uniformity is easier to work with than logtime-D-uniformity.
In fact, there are many examples of languages for which
the natural circuit families deciding them are quickly seen to be \FO-D-uniform,
but for which it is unknown whether they are logtime-D-uniform or linear-BD-uniform.

\begin{example}\label{ex:uniformity}
	Consider problem $X = \{ x : \text{the $\lfloor \sqrt{|x|} \rfloor$th bit of $x$ is 1} \}$.
	It is easy to see that $X$ is in $\AC^0$:
	there is a circuit family $C$ where each $C_n$
	consists of essentially one wire from the
	$\lfloor \sqrt{n} \rfloor$th input gate to the output gate.
	It has only one admissible numbering,
	and its direct connection language is $\{ \langle n, \lfloor \sqrt n \rfloor, 0, z \rangle : |z| = n \}$.

	The most difficult part in showing that this family is uniform
	is to show that we can somehow identify $\lfloor \sqrt{n} \rfloor$.
	For \FO-D-uniformity, this is easy:
	$r = \lfloor \sqrt n \rfloor$ is defined by the formula $r \cdot r \leq n \land (\forall s > r)\  s \cdot s > n$.
	Showing that $C$ is logtime-D-uniform appears to be more difficult,
	and we do not know whether this is the case.
	It is likely not linear-BD-uniform,
	as multiplication is conjectured to not be linear-time computable \cite{harvey_integer_2021}.
	\lipicsEnd
\end{example}

\subsection{Uniform \texorpdfstring{$\AC^0$}{AC0}}\label{sec:uniform-ac0}
Rather than the uniformity of specific circuit families,
we are usually more interested in whether an equivalent uniform family exists
for the same complexity class.

\begin{convention*}\label{conv:uniform-classes}
	Let $\mathcal U$ be a uniformity condition,
	and $\mathcal C$ a class of (parameterized) circuit families.
	If $\mathcal X$ is defined as the class of problems solved by families in $\mathcal C$,
	then \defn{$\mathcal U$-uniform~$\mathcal X$} is the subclass of problems
	solved by $\mathcal U$-uniform families in $\mathcal C$.
\end{convention*}

From this perspective,
uniformity conditions can be shown to be equivalent for $\AC^0$.

\begin{theorem}[Barrington, Immerman, Straubing \cite{barrington_uniformity_1990}]\label{thm:uniform-ac0}
	The following classes are equal:
	\begin{enumerate}
		\item logtime-D-uniform $\AC^0$
		\item \FO-D-uniform $\AC^0$
		\item \FO
	\end{enumerate}
\end{theorem}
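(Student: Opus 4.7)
The plan is to prove the cyclic chain of inclusions $\FO \subseteq {}$logtime-D-uniform $\AC^0 \subseteq {}$\FO-D-uniform $\AC^0 \subseteq \FO$. The middle inclusion follows almost immediately from the folklore containment $\RTIME(\log n) \subseteq \FO$: a random-access Turing machine running in logarithmic time has configurations describable by $O(\log n)$ bits, and the existence of an accepting computation, together with the input bits it reads via the query tape, can be encoded by a first-order formula using a constant number of numeric variables of length $O(\log n)$ each, as carried out in \cite{barrington_uniformity_1990}. Applying this to $L_D(C)$ gives the second inclusion for free.

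For $\FO \subseteq {}$logtime-D-uniform $\AC^0$, the strategy is to transform a given sentence $\phi$ directly into a constant-depth, polynomial-size circuit family. Each internal gate is indexed by a pair $\langle s,\vec v\rangle$ where $s$ points to a subformula of $\phi$ (constant-size information) and $\vec v$ assigns values to its free variables, each coordinate occupying $O(\log n)$ bits. Quantifiers become unbounded-fan-in $\gor/$ or $\gand/$ gates wiring into the subformula gates for each possible witness; atomic formulas in $\leq$ and $\bit$ are pre-evaluated by arithmetic on $\vec v$, while $X(i)$ wires in the $i$th input gate. The depth equals the quantifier rank of $\phi$, which is constant. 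Deciding $L_D(C)$ in $\RTIME(\log n)$ then amounts to decoding the gate number and checking one arithmetic or bit predicate on numbers of $O(\log n)$ bits, which is logtime-computable by \cite[Lemma~6.1]{barrington_uniformity_1990}.

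The main obstacle is the final inclusion \FO-D-uniform $\AC^0 \subseteq \FO$, where circuit evaluation itself must be expressed in first-order logic. Given a family $C$ of constant depth $d$ whose direct connection language is defined by an \FO-formula $\phi_D$, I would inductively build formulas $\mathrm{Val}_i(g)$ asserting that gate $g$ lies at level at most $i$ and evaluates to $1$. At the base, $\mathrm{Val}_0(g)$ identifies $g$ as an input gate for which $X(g)$ holds. At step $i{+}1$, the gate type is extracted via $\phi_D$; an $\gor/$ gate yields $\mathrm{Val}_{i+1}(g) \equiv \exists p\,\exists h\,\bigl(\mathrm{Pred}(g,h,p) \land \mathrm{Val}_i(h)\bigr)$, where $\mathrm{Pred}(g,h,p)$ abbreviates the subformula of $\phi_D$ asserting that $h$ is the $p$th predecessor of $g$; dually with $\forall$ for $\gand/$, and $\gnot/$ uses the single predecessor's negated value. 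Unbounded fan-in is absorbed into this single quantifier over predecessor indices $p$, which are numbers of $O(\log n)$ bits and hence admissible as first-order variables in the sense of \cite[p.~293]{barrington_uniformity_1990}. Since $d$ is constant, $\psi \equiv \mathrm{Val}_d(\mathrm{output})$ is a fixed-size \FO-sentence expressing acceptance, closing the cycle.
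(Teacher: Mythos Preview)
The paper does not give its own proof of this theorem; it is quoted verbatim from Barrington, Immerman, and Straubing \cite{barrington_uniformity_1990} and used as a black box throughout. Your sketch is precisely the classical cycle from that reference: formulas to circuits by indexing gates with (subformula, assignment) pairs, logtime $\subseteq \FO$ by simulating configurations, and circuit evaluation back into $\FO$ via a depth-$d$ unfolding of $\mathrm{Val}_i$. So there is nothing to compare against in this paper, and your outline matches the cited source.

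One point you pass over quickly deserves a sentence of care. In the last inclusion you write ``$\mathrm{Pred}(g,h,p)$ abbreviates the subformula of $\phi_D$,'' but $\phi_D$ is a sentence over structures $S_{\langle G,a,p,z\rangle}$ of size roughly $n + O(\log n)$, not over $S_x$. To use it inside your evaluation formula you must interpret that structure inside $S_x$ with $G,a,p$ as free number variables: the $X$-atoms of $\phi_D$ referring to positions inside $G,a,p$ become $\bit$-predicates on those variables, and those referring to positions inside $z$ can be routed to $X$ on $x$ (the answer is independent of the actual bits of $z$, only $|z|$ matters). This relativisation is routine and is in \cite{barrington_uniformity_1990}, but it is the one place where ``abbreviates'' hides real work. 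With that caveat, the argument is correct.
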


On the way to our main results we prove the following folklore extension of \cref{thm:uniform-ac0}.

\begin{theorem}\label{thm:regan-vollmer}
	Linear-BD-uniform $\AC^0$ equals logtime-D-uniform $\AC^0$.
\end{theorem}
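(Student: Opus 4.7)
The plan is to prove the two inclusions separately, with one direction holding per-family and the other routed through the Barrington--Immerman--Straubing correspondence (\cref{thm:uniform-ac0}).

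For linear-BD-uniform $\AC^0 \subseteq$ logtime-D-uniform $\AC^0$, I would show that any linear-BD-uniform $\AC^0$ family is already logtime-D-uniform. Given a query $w = \langle G, a, p, z\rangle$ to $L_D(C)$ with $|z| = N$, the components $G$, $a$, $p$ occupy only $O(\log s(N)) = O(\log N)$ bits, by \cref{def:numbering} and the polynomial size bound of $\AC^0$. A random-access Turing machine can determine $N$ in $O(\log N)$ time by binary search using the out-of-bounds query response \cite[Lemma 6.1]{barrington_uniformity_1990}, read $G$, $a$, $p$ directly from their known positions, assemble $\langle G, a, p, N\rangle$ on a work tape in binary, and finally simulate the $O(\log N)$-time multitape machine for $L_{BD}(C)$ on that string, all within $O(\log N)$ steps.

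For the reverse inclusion, since \cref{thm:uniform-ac0} gives logtime-D-uniform $\AC^0 = \FO$, it suffices to show $\FO \subseteq$ linear-BD-uniform $\AC^0$. Given an $\FO$ sentence $\phi$ in prenex normal form $Q_1 x_1 \cdots Q_k x_k\, \psi$, I would build the standard ``unfolded'' circuit family $C_\phi$ in which each gate is indexed by a pair $(l, (a_1, \ldots, a_{l-1}))$ consisting of a quantifier depth $l \in \{0, \ldots, k\}$ and a partial assignment of domain values; such a gate has fan-in $N+1$ and is an \gand/ or \gor/ depending on $Q_l$, while at depth $k$ it is replaced by a constant-depth subcircuit evaluating $\psi$ from the atoms $\leq$, $\bit$ and the input literals $X$. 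Numbering each gate by concatenating a constant-width level tag with a $k\lceil\log(N+1)\rceil$-bit assignment block (unused slots zero-padded) yields an admissible numbering of width $O(\log N)$.

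Linear-BD-uniformity then follows by exhibiting a multitape Turing machine that, on input $\langle G, a, p, N\rangle$ of length $m = O(\log N)$, decodes the level and partial assignment from $G$, determines the gate type from the level alone, and either returns the $p$-th child -- obtained by writing $p$ into the next assignment slot and incrementing the level -- or answers the wiring query for the fixed subcircuit realising $\psi$ at the quantifier-free leaves. Every operation required -- bit-slice extraction, concatenation, equality tests, and single-bit $\bit$-queries on $O(\log N)$-bit numbers -- is trivially linear-time on a multitape machine. I expect the main obstacle to be bookkeeping rather than conceptual depth: the gate numbering must simultaneously be admissible in the sense of \cref{def:numbering}, support linear-time decoding, and fit in the tight $O(\log N)$ budget, which rules out operations such as multiplication of $\log N$-bit numbers whose linear-time feasibility is open. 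Restricting to fixed-width bit-slicing and concatenation, and exploiting that $k$ is a constant while $\psi$ is a fixed Boolean combination of atoms, circumvents this and yields the desired uniformity.
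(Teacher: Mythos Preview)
Your proof is correct but takes a different route from the paper. For the forward inclusion you argue exactly as the paper does (the non-parameterized analogue of \cref{lem:linear-bd-implies-log-d}). For the reverse inclusion, however, the paper does \emph{not} pass through \cref{thm:uniform-ac0}: instead it observes (right after \cref{lem:linear-bd-uniform-para-ac0=logtime-d-uniform-para-ac0}) that the same simgate construction works in the non-parameterized constant-depth case. Concretely, given a logtime-D-uniform family $C$, the paper uses \cref{lem:logtime-linear-bd-ac0} (with trivial parameter) to obtain a linear-BD-uniform constant-depth family $E$ deciding $L_D(C)$, and then builds a linear-BD-uniform family $D$ of constantly many simgate layers that simulates $C$ by plugging copies of $E$ into the query blocks; linear-BD-uniformity of $D$ follows from \cref{lem:substitution}.

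Your approach---strengthening the $\FO\to\AC^0$ direction of Barrington--Immerman--Straubing to yield linear-BD-uniform circuits directly from a prenex formula---is arguably more elementary for this particular theorem, since it avoids the simgate machinery and the query-guessing argument of \cref{lem:logtime-linear-bd-ac0}. The paper's route, on the other hand, is the one that scales: the simgate construction is what lets them handle depth-$d(k)$ families in \cref{lem:linear-bd-uniform-para-ac0=logtime-d-uniform-para-ac0} and \cref{lem:fo-d-implies-linear-bd}, where a single prenex unfolding would not suffice. Your argument does rely on \cref{thm:uniform-ac0} as a black box, whereas the paper's proof of \cref{thm:regan-vollmer} is self-contained modulo \cref{lem:logtime-linear-bd-ac0} and \cref{lem:substitution}; this is harmless here since \cref{thm:uniform-ac0} is an external result, but it is worth noting.
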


A stronger version, that logtime-D-uniformity is even
equivalent with linear-BD-uniformity, is stated in~\cite[Section 4]{regan_gap-languages_1997}
and appears to be used implicitly in other work \cite[Section 1.1]{santhanam_uniformity_2014}.
However, this is not something we can prove. The
argument in \cite{regan_gap-languages_1997} seems to silently assume that $\RTIME(n) = \DTIME(n)$ which we
believe is an open problem.

We stress that the results above do not show that the uniformity conditions are equivalent.
In particular, $\RTIME(\log n)$ is strictly weaker than $\FO$.
It merely happens to be the case that the
extra power of $\FO$ for constant-depth circuit family descriptions does not lead to circuit families with more computational power
than any logtime-uniform circuit family.

\subsection{Encoding lists}\label{subsec:lists}
We define an encoding for the lists in
\autoref{def:direct-connection-languages} as a generalization of the
 2-tuple encoding used by Barrington et al.~\cite{barrington_uniformity_1990}.
Each element $x\in 2^*$ is coded as $\delta(|x|)01x$, where $\delta(|x|)$ encodes
the length of $x$ in base $\{00,11\}$ and $01$ acts as a separator.
\begin{definition}\label{def:list}
	We encode $\langle x_0, \dots, x_m \rangle$ as
	\begin{align*}
		\double(|x_0|) 01 x_0
		\double(|x_1|) 01 x_1
		\dots
		\double(|x_m|) 01 x_m
	\end{align*}
	and define projection functions $\pi_i : 2^* \to 2^*$ by
	\begin{align*}
		\pi_i(x) = \begin{cases}
			x_i & \text{if $x$ is of the form $\langle x_0, \dots, x_m \rangle$ with $i \leq m$, and}\\
			0 & \text{otherwise.}
		\end{cases}
	\end{align*}
\end{definition}
For example, $\langle 0, 00, 000 \rangle$ is encoded as
\[
\underbrace{11}_{\delta(|x_0|)}~{01}~
\underbrace{0}_{x_0}~
\underbrace{1100}_{\delta(|x_1|)}~{01}~
\underbrace{00}_{x_1}~
\underbrace{1111}_{\delta(|x_2|)}~{01}~
\underbrace{000}_{x_2}.
\]

The encoding can be worked with in linear time,
in random-access logarithmic time,
and in first order logic with ordering and $\bit$.
In particular,
it can be checked whether a string encodes a list,
the lengths of list items can be extracted,
and the projection functions can be computed,
provided that the relevant item is not too long
(a logtime-machine cannot retrieve items of superlogarithmic length).
Technical details can be found in \autoref{app:proofs}.

\subsection{Parameterized uniformity}
In \cite[Proposition 6]{chen_lower_2019}, Chen and Flum prove a uniform version of $\para\AC^0 = \Para(\AC^0)$.
Their proof carries over to our setting of parameterized linear-BD-uniformity we define in \cref{sec:direct-uniformity} as follows.
In our reformulation we use that parameter functions $\kappa$ are $\FO$-definable.

\begin{theorem}[\cite{chen_lower_2019}]\label{thm:chen}
	Linear-BD-uniform $\para\AC^0$ is equal to
	$\Para(\text{linear-BD-uniform $\AC^0$})$.
\end{theorem}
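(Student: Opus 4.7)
The theorem asserts an equality, so the plan is to prove both inclusions separately, mirroring Chen and Flum's construction while checking that linear-BD-uniformity is preserved at each step.

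For $\Para(\text{linear-BD-uniform }\AC^0) \subseteq \text{linear-BD-uniform }\para\AC^0$, let $f$ and a linear-BD-uniform $\AC^0$ family $C$ witness the left-hand side. Define $D_{n,k}$ as the copy of $C_{N(n,k)}$, where $N(n,k) = |\langle x, f(k)\rangle|$ for any $|x| = n$, in which every input wire falling inside the $f(k)$-portion of the encoding has been replaced by the corresponding constant. Depth is preserved and the size becomes $O(N(n,k)^c) = O(h(k)\, n^c)$ for a computable $h$. A decider for $L_{BD}(D)$ on a query $\langle G, a, p, n, k\rangle$ first produces $f(k)$ in time bounded by a computable function of $k$, translates the query to one about $C_{N(n,k)}$, and invokes $C$'s linear-time decider, patching its output whenever the query lands on a hardwired position; the total running time is $O(\log n + F(k))$ for a suitable $F$.

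For the converse, suppose $D$ is a parameterized family of size $g(k)\, n^c$ and constant depth deciding $(Q,\kappa)$ with $L_{BD}(D) \in \DTIME(\log n + f_D(k))$. Choose $f$ so that $|f(k)|$ both encodes $k$ and exceeds $g(k) + 2^{f_D(k)}$, for instance $f(k) = \text{bin}(k) \cdot 0^{g(k) + 2^{f_D(k)}}$. Assemble $C_N$ as the disjoint union, over all $(n', k')$ with $|\langle x, f(k')\rangle| = N$ for $|x| = n'$, of subcircuits $S_{n',k'}$ isomorphic to $D_{n',k'}$ and wired to the $x$-portion of the input, together with a constant-depth selector that recovers $k'$ from the padding, identifies the matching $(n',k')$, and forwards the corresponding $S_{n',k'}$-output. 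The total size is polynomial in $N$, since the admissible $(n',k')$ are polynomially many and each $S_{n',k'}$ has size $g(k')\, {n'}^c \leq N \cdot N^c$, using $g(k') \leq |f(k')| \leq N$. For linear-BD-uniformity of $C$, a decider performs arithmetic to locate a gate's owning subcircuit and then invokes $D$'s decider; its $O(\log n' + f_D(k'))$ running time is $O(\log N)$ thanks to the padding.

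The main obstacle is precisely this time-budget matching. $C$'s uniformity decider has only logarithmic time in $N$ available, while $D$'s decider may use $f_D(k)$ time, which is a priori unrelated to $\log N$. Inflating $|f(k)|$ past $2^{f_D(k)}$ forces $\log N \geq f_D(k')$ for every $k'$ that can appear in an input of length $N$, defusing the issue; since $\Para$ permits $f$ to be any computable function, this padding incurs no cost. The remaining steps -- parsing the list encoding of \cref{def:list}, arithmetic on gate numbers, and wiring the selector -- are standard linear-time manipulations on a multitape Turing machine.
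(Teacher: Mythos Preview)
Your overall plan mirrors the Chen--Flum argument that the paper cites, and the padding trick in the converse direction (inflating $|f(k)|$ so that $\log N$ dominates the $f_D(k')$ term in $D$'s uniformity decider) is exactly right. However, the converse inclusion has a genuine gap.

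Your family $C$, on input $y = \langle x, z\rangle$, extracts $k'$ from the prefix of $z$, checks that $z$ has the shape $f(k')$, and outputs $D_{|x|,k'}(x)$. But under the Flum--Grohe definition of $\Para$ used here (\cref{def:Para}), the language $C$ must decide is exactly $L = \{\langle x, f(\kappa(x))\rangle : x \in Q\}$, and membership requires $z = f(\kappa(x))$, not merely $z = f(k')$ for \emph{some} $k'$. On an input $\langle x, f(k')\rangle$ with $k' \neq \kappa(x)$ your circuit may accept (the family $D$ is under no obligation to behave sensibly when its second index does not equal $\kappa(x)$), yet such an input is not in $L$. This is precisely why the paper stresses, in the sentence introducing the theorem, that ``in our reformulation we use that parameter functions $\kappa$ are $\FO$-definable'': the non-parameterized $\AC^0$ family $C$ must contain a constant-depth subcircuit that computes $\kappa(x)$ from $x$ and rejects whenever it disagrees with the $k'$ read from the padding. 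Your construction omits this check entirely, and without the $\FO$-computability hypothesis on $\kappa$ there is no obvious way to insert it.

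A smaller issue: your claim that the admissible $(n',k')$ are polynomially many in $N$ needs $|f(k)| \geq k$ (or some similar lower bound), which your explicit choice $f(k) = \mathrm{bin}(k)\cdot 0^{g(k)+2^{f_D(k)}}$ does not guarantee when $g$ and $f_D$ happen to be bounded; in that case $|f(k)| = O(\log k)$ and exponentially many $k'$ satisfy $|f(k')| \leq N$. This is easily repaired by adding $k$ to the padding exponent, but it should be said.
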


\section{Direct parameterized uniformity}\label{sec:direct-uniformity}
We first define uniformity notions for parameterized circuit families.
Admissible numberings (\autoref{def:numbering})
carry over to parameterized circuit families in the straightforward way,
and we again assume each parameterized family to come with such a numbering.

\begin{definition}\label{def:direct-connection-languages-para}
	Let $C$ be a parameterized circuit family.
	The \defn{direct connection language} $L_D(C)$
	consists of all strings of the form
	$\langle G, a, p, z, z' \rangle$
	where, writing $n$ for $|z|$ and $k$ for $|z'|$:
	\begin{enumerate}
		\item $G$ is a gate in $C_{n,k}$;
		\item if $p = \epsilon$
			then $a < 3$ and indicates the type of $G$
			(one of \gand/, \gor/, \gnot/);
		\item otherwise, the $p$th gate feeding into $G$ is the gate numbered $a$.
	\end{enumerate}

	The \defn{binary direct connection language} $L_{BD}(C)$
	consists of all strings of the form
	$\langle G, a, p, n, k \rangle$
	satisfying the same conditions,
	now with $n$ and $k$ written in binary.
	That is:
	\[
		L_{BD}(C) = \{\langle G,a,p,\lvert z\rvert,\lvert z'\rvert \rangle : \langle G, a,p,z,z' \rangle \in L_{D}(C)\}.
	\]
\end{definition}

\begin{example}
Consider the language
\begin{align*}
	L = \{ x_0\dots x_{n{-}1} \mid  \forall i \in [1,\lfloor\sqrt{n}\rfloor].\, x_0 = x_{i-1} \text{ and } x_{n{-}i} = x_{n{-}1}\}.
\end{align*}
Figure~\ref{fig:connection-language} shows the circuit $C_{5,2}$ from the
circuit family $C$ that computes the parameterized problem $(L, \kappa)$,
where the parameter function is defined by $\kappa(x) = \lfloor\sqrt{|x|}\rfloor$,
together with some words from the direct connection language $L_D(C$).
\lipicsEnd

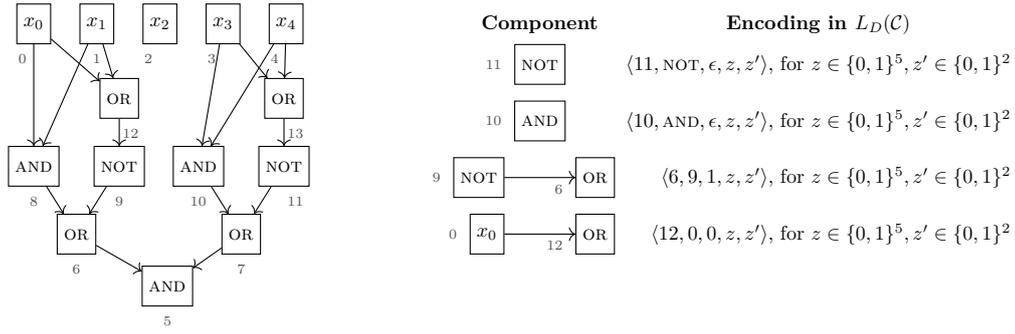
\begin{figure}\centering
	\begin{tikzpicture}[scale=0.75,
		every node/.style={transform shape, draw, minimum height = 2em},
		every label/.style={draw=none, color=lipicsGray, font = \scriptsize},
		label position = below,
		label distance = -1mm,
		grow'=up, <-]

		\path [level 1/.style={sibling distance=15mm,level distance=12mm},
		level 2/.style={sibling distance=10mm,level distance=12mm}]
		node [label={$6$}] (base) {\gor/}
		child { node[label={$8$}] (and) {\gand/}}
		child { node[label={$9$}] (not) {\gnot/}
			child {node[label={[xshift=2mm]$12$}] (or) {\gor/}}
		};

		\path [level 1/.style={sibling distance=15mm, level distance=12mm},
		level 2/.style={sibling distance=10mm, level distance=12mm}]
		node [right=22mm of base, label={$7$}] (base2) {\gor/}
		child { node[label={$10$}] (and2) {\gand/}}
		child { node[label={[xshift=2mm]$11$}] (not2) {\gnot/}
			child {node[label={[xshift=2mm]$13$}] (or2) {\gor/}}
		};

		\node[label={$5$}, below right=2mm and 8mm of base] (output) {\gand/};

		\path
			node [label={[xshift=-2mm]$0$}, above=18mm of and] (x0) {$x_0$}
			node [label={[xshift=0mm]$1$}, right=5mm of x0] (x1) {$x_1$}
			node [label={[xshift=-2mm]$2$}, right=5mm of x1] (x2) {$x_2$}
			node [label={[xshift=-2mm]$3$}, right=5mm of x2] (x3) {$x_3$}
			node [label={[xshift=-2mm]$4$}, right=5mm of x3] (x4) {$x_4$}
			(and) edge (x0)
			(and) edge (x1)
			(or) edge (x0)
			(or) edge (x1) 
			(and2) edge (x3)
			(and2) edge (x4)
			(or2) edge (x3)
			(or2) edge (x4) 
			(output) edge (base)
			(output) edge (base2)
			;

			\node[draw=none, right=30mm of x4] (text) {\textbf{Component}};
			\node[draw=none, right=20mm of text] (text2) {\textbf{Encoding in }$L_D(\mathcal{C})$};
			\node[below=0mm of text, label={[xshift=-2mm]left:$11$}] (ex1) {\gnot/};
			
			\node[draw=none, below right=0mm and 35mm of text2.south, anchor=north east] (word1) {$\langle 11, \gnot/, \epsilon,z,z'  \rangle$, for $z\in \{0,1\}^{5},z'\in \{0,1\}^2$};
			\node[below=10mm of text, label={[xshift=-2mm]left:$10$}] (ex1) {\gand/};
			\node[draw=none, below=10mm of text2.south -| word1.east, anchor=north east, label position=left] {$\langle 10, \gand/, \epsilon,z,z'  \rangle$, for $z\in \{0,1\}^{5},z'\in \{0,1\}^2$};

			\node[below=20mm of text, draw=none] (mid) {};
			\node[left=5mm of mid, label={[xshift=-2mm]left:$9$}] (not9) {\gnot/}; 
			\node[right=5mm of mid, label={[xshift=-2mm,yshift=-2mm]left:$6$}] (or6) {\gor/};
			\path (or6) edge (not9);
			\node[draw=none, below=20mm of text2.south -| word1.east, anchor=north east, label position=left] {$\langle 6, 9, 1,z,z' \rangle$, for $z\in \{0,1\}^{5},z'\in \{0,1\}^2$};

			\node[below=30mm of text, draw=none] (mid2) {};
			\node[left=5mm of mid2, label={[xshift=-2mm]left:$0$}] (in0) {$x_0$}; 
			\node[right=5mm of mid2, label={[xshift=-2mm,yshift=-2mm]left:$12$}] (or11) {\gor/};
			\path (or11) edge (in0);
			\node[draw=none, below=30mm of text2.south -| word1.east, anchor=north east, label position=left] {$\langle 12, 0, 0,z,z' \rangle$, for $z\in \{0,1\}^{5},z'\in \{0,1\}^2$};
		\end{tikzpicture}
	\caption{The circuit $C_{5,2}$ from the family $C$, and associated words $w \in L_D(C)$.}
	\label{fig:connection-language}
\end{figure}
\end{example}

We now define uniformity conditions for parameterized circuit families.
These are based on non-parameterized uniformity conditions.
The difference is that parameterized uniformity conditions allow a precomputation on the same parameter function used for the circuit family.
That is, where before we had $t(n)$ time to decide connections in circuits $C_n$,
we now have $t(n) + f(k)$ time to decide connections in circuits $C_{n,k}$,
for a computable $f$ of choice.

We also adapt \FO-D-uniformity.
Write $\para\FO$ for $\Para(\FO)$.
The motivation for \FO-D-uniformity for parameterized circuit families is that words $\langle G, a, p, z, z' \rangle$
relating to $C_{|z|,|z'|}$ should now be first-order expressible in terms of $|z|$ and $f(|z'|)$,
where $f$ is a computable function of choice.
Again writing $\pi_4$ for the fifth projection function
(so that $\pi_4(\langle G, a, p, z, z' \rangle) = z'$),
this can be expressed as
$(L_D(C), \pi_4)$ being in $\para\FO$.

\begin{definition}
	Let $C$ be a parameterized circuit family.
	We call it
	\begin{description}
		\item [linear-BD-uniform]
			if $(L_{BD}(C), \pi_4)$ is in $\DTIME(n + f(k))$
			for some computable $f$,
			that is, on inputs $w$ of the form $\langle G, a, p, n, k \rangle$
			it takes time $O(|w| + f(k))$;
		\item [logtime-D-uniform]
			if $(L_D(C), \pi_4)$ is in $\RTIME(\log n + f(k))$
			for some computable $f$,
			that is, on inputs $w$ of the form $\langle G, a, p, z, z' \rangle$
			it takes time $O(\log |w| + f(|z'|))$;
		\item [\FO-D-uniform]
			if $(L_D(C), \pi_4)$ is in $\para\FO$.
	\end{description}
\end{definition}

In the following, we sometimes leave the parameter function $\pi_4$ implicit and
e.g. simply state that linear-BD-uniformity is defined as $L_{BD}(C)$ being in $\DTIME(n + f(k))$.

For parameterized circuit families of size $O(f(k)n^c)$,
linear-BD-uniformity implies logtime-D-uniformity,
which in turn implies \FO-D-uniformity.
The former can be shown as follows.

\begin{restatable}{lemma}{linearBDImpliesLogD}\label{lem:linear-bd-implies-log-d}
	Let $C$ be a parameterized circuit family of size bounded by $f(k)n^c$.
	If $C$ is linear-BD-uniform then $C$ is logtime-D-uniform.
\end{restatable}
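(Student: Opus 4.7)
The plan is to build, from the linear-time multitape machine $M_{BD}$ witnessing linear-BD-uniformity, a random-access machine $M_D$ witnessing logtime-D-uniformity. On input $w = \langle G, a, p, z, z'\rangle$ with $|z|=n$ and $|z'|=k$, $M_D$ will first reconstruct the much shorter binary list $w' = \langle G, a, p, |z|, |z'|\rangle$ on a work tape, and then simulate $M_{BD}$ on $w'$. Since $|w'|$ will be $O(\log n + f(k))$ and $M_{BD}$ runs in time $O(|w'| + f(k))$, the total running time will fit the desired $O(\log n + f(k))$ budget.

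For the conversion I would use the list-manipulation utilities for the encoding of \cref{def:list}. Because every circuit in $C$ has size at most $f(k)\,n^c$, each gate number, and likewise the wire index $p$, has $O(\log n + \log f(k))$ bits. Hence $G$, $a$, and $p$, together with their $\delta$-headers and separators, can be located by random access and copied to a work tape in $O(\log n + \log f(k))$ time. The lengths $n = |z|$ and $k = |z'|$ are read directly from their $\delta$-headers in $O(\log n)$ and $O(\log k)$ time respectively, without ever inspecting the long strings $z$ and $z'$ themselves. Assuming without loss of generality that $f(k) \geq \log k$ (otherwise replace $f$ by $k \mapsto \max(f(k),\lceil\log k\rceil)$, which only enlarges the allowed time), everything so far fits the $O(\log n + f(k))$ budget, and then writing $w'$ onto the work tape costs a further $O(|w'|) = O(\log n + f(k))$.

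Finally, $M_D$ simulates $M_{BD}$ on $w'$ with constant-factor overhead by dedicating one of its work tapes to serve as the simulated input tape of $M_{BD}$, so the simulation takes $O(|w'| + f(k)) = O(\log n + f(k))$ time. The main obstacle is checking that each list-encoding manipulation --- locating items, reading $\delta$-headers, writing out the new list --- really does fit in the tight random-access logarithmic budget; this reduces to the list-encoding facts summarized in \cref{subsec:lists} and proved in the appendix. The key conceptual point is that $M_D$ never needs to read $z$ or $z'$ themselves, only their short $\delta$-headers, so the parts of the input that may be arbitrarily long relative to the time budget pose no obstacle.
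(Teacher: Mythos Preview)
Your approach matches the paper's: convert $\langle G,a,p,z,z'\rangle$ to $\langle G,a,p,|z|,|z'|\rangle$ on a work tape and simulate the linear-time machine there. One gap, however, needs addressing. You argue that $G$, $a$, $p$ have $O(\log n + \log f(k))$ bits ``because every circuit in $C$ has size at most $f(k)n^c$'', but this bound applies only to \emph{valid} gate numbers and wire indices, whereas the input $w$ is an arbitrary string and may carry a $G$ (or $a$, or $p$) with far more bits. In that case copying $G$ to a work tape does not take $O(\log n + \log f(k))$ time, and your machine would overrun its budget on such inputs. The paper's proof handles this by first reading the lengths $|G|$, $|a|$, $|p|$ from their $\delta$-headers (which is cheap, $O(\log|w|)$) and explicitly rejecting if any exceed the bound $u(\log n + f(k))$; only afterward does it copy. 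With that one extra rejection step inserted, your argument is complete and coincides with the paper's.
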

\begin{proof}[Proof sketch]
	Let $M$ be a machine deciding $(L_{BD}(C), \pi_4)$ in time $O(n + f(k))$,
	and consider the following random-access Turing machine $M'$:
	it rejects all inputs not of the form $\langle G, a, p, z, z' \rangle$
	with $|G|,|a|,|p|$ below $O(\log |z| + f(|z'|))$.
	Otherwise, it writes $\langle G,a,p,|z|,|z'| \rangle$ on a work tape
	and then acts as $M$.
	Now $M'$ witnesses $(L_D(C), \pi_4) \in \RTIME(\log n + g(k))$.
	A full proof can be found in \autoref{app:proofs}.
\end{proof}

Next we prove that logtime-D-uniformity implies \FO-D-uniformity.
It suffices to show the inclusion $\RTIME(\log n + f(k)) \subseteq \para\FO$ for all computable $f$.
We do this by first
showing that $\RTIME(\log n + f(k))$ is contained in linear-BD-uniform $\para\AC^0$,
because this will be used again in \cref{subsec:simulation}.
Through \cref{thm:uniform-ac0,thm:regan-vollmer},
the following result (and proof) can be seen as a parameterized variant of $\RTIME(\log n) \subseteq \FO$
shown in \cite[Proposition 7.1]{barrington_uniformity_1990}.

\begin{restatable}{lemma}{logtimeLinearBDACn}\label{lem:logtime-linear-bd-ac0}
	$\RTIME(\log n + f(k))$ is contained in linear-BD-uniform $\para\AC^0$,
	for all computable $f$.
\end{restatable}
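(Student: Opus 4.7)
The plan is to mirror the classical proof that $\RTIME(\log n) \subseteq \FO$ from~\cite[Proposition~7.1]{barrington_uniformity_1990}, while tracking the parameter separately. Let $M$ be a random-access Turing machine deciding $(Q, \kappa)$ in time $T(n,k) = c(\log n + f(k))$ for some constant $c$. The behavior of $M$ on any input of length $n$ with parameter $k$ is fully determined by the sequence $h = (h_1, \ldots, h_T)$ of responses received to its at most $T = T(n, k)$ input queries, where each $h_i \in \{0, 1, \bot\}$ (with $\bot$ for out-of-bounds). By simulating $M$ offline under an assumed response sequence $h$, I recover (i) whether that computation accepts, and (ii) the sequence of addresses $a_1(h), \ldots, a_T(h)$ that $M$ would actually query. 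The total number of such histories is bounded by $3^{T(n,k)} \leq g(k)\,n^{c'}$ for some computable $g$ and constant $c'$.

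From here I build $C_{n,k}$ as one large OR, ranging over all accepting histories $h$, of a constant-depth AND gadget that verifies the consistency of $h$ with the real input: for each $i$ with $h_i \in \{0,1\}$, it checks that input bit $a_i(h)$ equals $h_i$; for each $i$ with $h_i = \bot$, it checks that $a_i(h) \geq n$. Since each $a_i(h)$ is a fixed value depending only on $h$ (and independent of the actual input), these gadgets are just the standard $\AC^0$ bit-indexing and comparison circuits. The resulting family has size $F(k) n^{c''}$ and constant depth, so it sits in $\para\AC^0$ if it is uniform.

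For linear-BD-uniformity, I choose an admissible numbering under which every gate number encodes (i) the history $h$ the gate belongs to, (ii) which verification gadget within that history, and (iii) its internal coordinates inside that gadget. A binary connection query $\langle G, a, p, n, k \rangle$ is then handled by parsing $G$, extracting $h$ and the gadget coordinates, and running an offline simulation of $M$ on $h$ to recover the relevant queried addresses and acceptance verdict. Because both $M$'s running time and $|h|$ are $O(\log n + f(k))$, which matches the linear-BD budget of $O(|w| + f'(k))$ after reading $n$ and $k$ in binary, this simulation fits within the allotted time. The remaining structural decisions (gate type, identity of the $p$th fan-in) reduce to standard linear-time arithmetic on $O(\log n + \log F(k))$-bit numbers, which can be handled by the bookkeeping primitives already discussed in \cref{subsec:lists}.

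The main obstacle will be faithfully reimplementing the offline simulation of the random-access machine $M$ on a multitape linear-time machine, with the tape holding $h$ standing in for $M$'s input tape. Care is needed in the encoding of histories into gate numbers so that extracting $h$ and then simulating a step of $M$ can be interleaved in a single linear scan; but since all relevant strings (history, queried addresses, work-tape contents) have length $O(\log n + f(k))$, this is essentially routine multitape bookkeeping rather than a conceptual difficulty, and by design it matches the parameterized linear-BD budget exactly.
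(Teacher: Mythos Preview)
Your proposal is correct and follows essentially the same approach as the paper: both proofs branch over all possible query-response histories $h \in \{0,1,\bot\}^{T(n,k)}$, build an \gor/-of-\gand/s circuit whose $h$-indexed \gand/ checks consistency of $h$ against the actual input bits at the (precomputed) addresses $a_i(h)$, and argue linear-BD-uniformity via an offline multitape simulation of $M$ under the guessed responses. The only cosmetic difference is that the paper ranges over \emph{all} histories and feeds an extra constant bit $M'(h)$ into each \gand/, whereas you restrict the \gor/ to accepting histories up front; these are equivalent.
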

\begin{proof}[Proof sketch]
	A random-access machine $M$ can make at most $O(\log n + f(k))$ queries in $O(\log n + f(k))$ time,
	so a
	constant depth circuit family of size
	$2^{O(\log n + f(k))} = O(g(k)n^c)$
	can nondeterministically branch over all series of query responses,
	verify which is correct (checking all queries in parallel),
	and propagate the output of a simulation of $M$ on the correctly guessed responses to the output gate.
	This can be done linear-BD-uniformly because
	simulating an $O(\log n + f(k))$ time random-access Turing machine
	by answering queries from a given list of guesses
	takes $O(\log n + h(k))$ time on a non-random-access Turing machine.

	A full proof can be found in \autoref{app:proofs}.
\end{proof}

We are now ready to prove that logtime-D-uniformity implies \FO-D-uniformity.
Here we can make use of the work of Barrington et al.~\cite{barrington_uniformity_1990} and that of Chen and Flum~\cite{chen_lower_2019}.

\begin{lemma}\label{lem:logtime-d-implies-fo-d}
	Let $C$ be a parameterized circuit family of size bounded by $f(k)n^c$.
	If $C$ is logtime-D-uniform then $C$ is \FO-D-uniform.
\end{lemma}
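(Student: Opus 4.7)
The plan is to chain together the ingredients already established in the excerpt rather than build anything from scratch. Starting from the hypothesis that $C$ is logtime-D-uniform, we have $(L_D(C),\pi_4) \in \RTIME(\log n + f(k))$ for some computable $f$. Applying \cref{lem:logtime-linear-bd-ac0} to this parameterized problem immediately places it in linear-BD-uniform $\para\AC^0$. Next, invoke \cref{thm:chen} (Chen--Flum) to rewrite linear-BD-uniform $\para\AC^0$ as $\Para(\text{linear-BD-uniform }\AC^0)$. Finally, combining \cref{thm:regan-vollmer} with \cref{thm:uniform-ac0} identifies linear-BD-uniform $\AC^0$ with $\FO$, so we land in $\Para(\FO) = \para\FO$, which is precisely the definition of $\FO$-D-uniformity.

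Concretely, the order of steps I would carry out is:
\begin{enumerate}
    \item Unfold the definition of logtime-D-uniformity to state the hypothesis as $(L_D(C),\pi_4) \in \RTIME(\log n + f(k))$.
    \item Cite \cref{lem:logtime-linear-bd-ac0} to obtain $(L_D(C),\pi_4) \in $ linear-BD-uniform $\para\AC^0$.
    \item Cite \cref{thm:chen} to rewrite this as membership in $\Para(\text{linear-BD-uniform }\AC^0)$.
    \item Cite \cref{thm:regan-vollmer} and then \cref{thm:uniform-ac0} to conclude $(L_D(C),\pi_4) \in \Para(\FO) = \para\FO$, which is the definition of $\FO$-D-uniformity.
\end{enumerate}

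The main subtlety I expect is bookkeeping the parameter function: every class traversed in the chain must use the same parameter function $\pi_4$ on $L_D(C)$, and we rely on the fact that $\Para(\cdot)$ as defined in \cref{def:Para} respects this choice. In particular, when applying \cref{thm:chen}, the statement is about an equality of classes of parameterized problems, so the parameter function is preserved automatically. A secondary point worth a brief remark is that the size bound $f(k)n^c$ is needed so that \cref{lem:logtime-linear-bd-ac0} applies in the first place (the simulated $\para\AC^0$ family must itself fit within the size bound compatible with the parameterized circuit class), and that the resulting $\FO$ formula for $L_D(C)$ may involve a precomputed $g(k)$ absorbed into $\para\FO$'s parameter-dependence. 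Beyond these accounting details there is no real obstacle — the lemma is essentially a one-line composition of the earlier results, written out as a short paragraph.
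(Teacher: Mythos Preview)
Your proposal is correct and follows essentially the same chain of inclusions as the paper's own proof: unfold logtime-D-uniformity, apply \cref{lem:logtime-linear-bd-ac0}, then \cref{thm:chen}, then \cref{thm:regan-vollmer} together with \cref{thm:uniform-ac0}, landing in $\para\FO$. One minor remark: the size bound $f(k)n^c$ is not actually needed to invoke \cref{lem:logtime-linear-bd-ac0} (that lemma is a statement about $\RTIME$ classes, not about $C$), so your aside about it is slightly off, but this does not affect the argument.
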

\begin{proof}
	Assume $C$ is logtime-D-uniform.
	Then, by definition,
	there is some computable $g$ so that
	$(L_D(C), \pi_4)$ is in $\RTIME(\log n + g(k))$.
	From the inclusions
	\begin{align*}
		\RTIME(\log n + g(k)) & \subseteq \text{linear-BD-uniform $\para\AC^0$} && \text{by \autoref{lem:logtime-linear-bd-ac0}}\\
				      & = \Para(\text{linear-BD-uniform $\AC^0$}) && \text{by \autoref{thm:chen}}\\
				      & = \Para(\FO) && \text{by \cref{thm:regan-vollmer,thm:uniform-ac0}} \\
				      & = \para\FO && \text{by definition}
	\end{align*}
	it follows that $(L_D(C), \pi_4)$ is in \para\FO,
	meaning that $C$ is \FO-D-uniform.
\end{proof}

\section{Uniform \texorpdfstring{$\para\AC^0$ and $\para\AC^{0\up}$}{paraAC0 and paraAC0↑}}
\label{sec:para-ac0-and-para-ac0^}
In the previous section we have seen that,
for parameterized circuit families of size $f(k)n^c$,
linear-BD-uniformity implies logtime-D-uniformity,
which in turn implies \FO-D-uniformity.
It is unknown whether any of these the implications can be reversed.
(See also \cref{ex:uniformity}.)

We can however show implications up-to similar-size circuit equivalence in the reverse direction, by which we mean that
the uniformity conditions do not result in different complexity classes for both $\para\AC^0$ and $\para\AC^{0\up}$.
That is,
for every \FO-D-uniform $O(f(k)n^c)$-sized circuit family of depth $O(d(k))$,
there is a linear-BD-uniform $O(g(k)n^{b})$-sized family of depth $O(d(k))$ that decides the same parameterized problem.

For $\para\AC^0$,
the equality of logtime-D-uniform $\para\AC^0$ and $\FO$-D-uniform $\para\AC^0$
mirrors \cref{thm:uniform-ac0},
and could be proved
using the constant-depth techniques in \cite{barrington_uniformity_1990}.
This does not work for the $f(k)$ depth circuits of $\para\AC^{0\up}$ (the circuits blow up).
We address this by constructing logtime-uniform circuits of $f(k)$ layers that simulate $\FO$-uniform circuits by computing their connection language.
A similar layered approach can be found in \cite{barrington_time_1994},
where they simulate $\FO[t(n)]$ to prove uniformity equivalences for $\AC[t(n)]$.

\begin{restatable}{theorem}{directuniformitytheorem}\label{thm:uniform-para-ac0^}
	The following classes are equal:
	\begin{enumerate}
		\item linear-BD-uniform $\para\AC^{0\up}$\label{thm:uniform-para-ac0^:1},
		\item logtime-D-uniform $\para\AC^{0\up}$\label{thm:uniform-para-ac0^:2}, and
		\item \FO-D-uniform $\para\AC^{0\up}$\label{thm:uniform-para-ac0^:3}.
	\end{enumerate}
	The same holds for $\para\AC^0$.
\end{restatable}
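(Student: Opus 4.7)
The inclusions $(1)\subseteq(2)\subseteq(3)$ come essentially for free from the earlier lemmas. Every family witnessing membership in $\para\AC^0$ or $\para\AC^{0\up}$ has size bounded by some $f(k)n^c$, so \cref{lem:linear-bd-implies-log-d} upgrades a linear-BD-uniform witness to a logtime-D-uniform one, and \cref{lem:logtime-d-implies-fo-d} further to an \FO-D-uniform one, preserving the problem decided. The genuine content of the theorem is therefore the reverse inclusion $(3)\subseteq(1)$, for which I would treat the two circuit classes separately.

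For the $\para\AC^0$ case, I would avoid any layered construction. An \FO-D-uniform $\para\AC^0$ family can be evaluated by a parameterized constant-depth circuit that first computes its direct connection language with a $\para\FO$ sub-circuit and then applies the gate functions, a direct parameterization of the Barrington--Immerman--Straubing simulation underlying \cref{thm:uniform-ac0}. This places \FO-D-uniform $\para\AC^0$ inside $\para\FO$. Chaining \cref{thm:uniform-ac0,thm:regan-vollmer,thm:chen} now gives
\[
  \para\FO \;=\; \Para(\FO) \;=\; \Para(\text{linear-BD-uniform }\AC^0) \;=\; \text{linear-BD-uniform }\para\AC^0,
\]
closing the loop.

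For $\para\AC^{0\up}$ this shortcut fails because the depth may grow as $f(k)$, whereas $\para\FO$ is constant-depth. Here I would carry out the layered simulation announced just before the theorem. Given an \FO-D-uniform $\para\AC^{0\up}$ family $C$ of size $f(k)n^c$ and depth $h(k)$, construct $C'$ whose layer $\ell \in \{0,\dots,h(k)\}$ carries one value-wire per gate of $C_{n,k}$ at level $\ell$. The transition block from layer $\ell{-}1$ to $\ell$ does two things in constant depth: (i) it invokes a linear-BD-uniform $\para\AC^0$ sub-circuit that decides the $\para\FO$ direct connection language of $C$, to recover the type and predecessors of each target gate $G$; (ii) it combines the corresponding predecessor values with appropriate \gand/, \gor/, \gnot/ gadgets. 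The sub-circuit in (i) is available by the already-established $\para\AC^0$ half of the theorem or, equivalently, by composing \cref{lem:logtime-linear-bd-ac0} with \cref{thm:chen}. Stacking $h(k)$ such blocks yields a family of depth $O(h(k))$ and size polynomial in $f(k)n^c$ times $h(k)$, still inside $\para\AC^{0\up}$, that decides the same problem as $C$.

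The main obstacle I expect is the uniformity bookkeeping for $C'$. Each transition block is individually linear-BD-uniform, but certifying that $C'$ itself is linear-BD-uniform requires a global admissible numbering whose binary direct connection language is decidable in time $O(|w|+g(k))$ on a multitape machine. I would number each gate of $C'$ by a triple (layer index, block coordinate, inner gate number); these fields decode in linear time, after which a query about $C'$ is dispatched to the underlying linear-BD-uniform decision procedure of the relevant block on a suitably rewritten input. Verifying that this rewriting plus the inner decision procedure still fit inside the linear-plus-$g(k)$ budget is the only real subtlety, and it works out because the block's decision procedure depends only on $k$ (and on the binary length of $n$), not on the particular gate being queried.
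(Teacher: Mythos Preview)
Your forward inclusions $(1)\subseteq(2)\subseteq(3)$ match the paper exactly. For the reverse direction the paper does \emph{not} split the two classes: it proves one lemma (\cref{lem:fo-d-implies-linear-bd}) that, for any \FO-D-uniform family of size $O(f(k)n^c)$ and depth bounded by $d(k)$, builds an equivalent linear-BD-uniform family of depth $O(d(k))$, and then instantiates $d(k)=O(1)$ for $\para\AC^0$. The construction stacks $d(k)$ layers of ``simgates'', one per \emph{possible gate number} at every layer; the query blocks inside each simgate are filled by substituting in a linear-BD-uniform $\para\AC^0$ family for $L_D(C)$, obtained via \cref{thm:uniform-ac0,thm:regan-vollmer,thm:chen} together with an intermediate \cref{lem:linear-bd-uniform-para-ac0=logtime-d-uniform-para-ac0}. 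The uniformity bookkeeping you anticipate is isolated into a standalone substitution lemma (\cref{lem:substitution}) rather than argued ad hoc.

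Your $\para\AC^{0\up}$ argument is essentially this same layered simulation, with one wobble: ``one value-wire per gate of $C_{n,k}$ at level $\ell$'' is not implementable as stated, since deciding which gates sit at level $\ell$ is itself a depth-$\ell$ question about $C$. The paper sidesteps this by carrying a simgate for \emph{every} number below the size bound at \emph{every} layer and letting irrelevant simgates output~$0$; you should do the same.

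Your $\para\AC^0$ argument, by contrast, is a genuinely different route. You pass through the equality $\para\FO=\text{linear-BD-uniform }\para\AC^0$ and a parameterized Barrington--Immerman--Straubing inclusion $\text{\FO-D-uniform }\para\AC^0\subseteq\para\FO$, whereas the paper just runs the simgate machinery with constant $d(k)$. Your route is shorter and more transparent for constant depth; the paper's buys a single uniform proof for both classes at the cost of building the layered apparatus even when it is not strictly needed.
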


\subparagraph*{Proof outline.}
We will use the following structure,
as visualized in \cref{fig:outline} for $\para\AC^{0}$.
In \cref{sec:direct-uniformity},
we have already proved that
\[
	\text{linear-BD-uniformity} \xRightarrow{\text{Lemma~\ref{lem:linear-bd-implies-log-d}}} \text{logtime-D-uniformity} \xRightarrow{\text{Lemma~\ref{lem:logtime-d-implies-fo-d}}} \FO\text{-D-uniform}
\]
for circuit families of size $O(f(k)n^c)$,
which directly gives the inclusions $(1) \subseteq (2) \subseteq (3)$
for both $\para\AC^0$ and $\para\AC^{0\up}$.
The next step is to prove $(2) \subseteq (1)$,
in particular that linear-BD-uniform $\para\AC^0$ is equal to logtime-D-uniform $\para\AC^0$
(\cref{lem:linear-bd-uniform-para-ac0=logtime-d-uniform-para-ac0}).
We then use this to show $(3) \subseteq (1)$ for both $\para\AC^0$ and $\para\AC^{0\uparrow}$
(\cref{lem:fo-d-implies-linear-bd}).

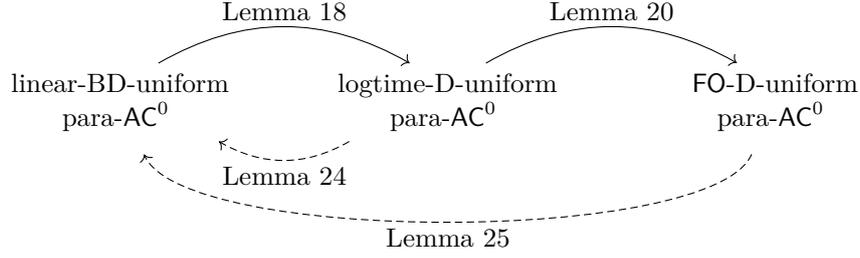
\begin{figure}
\centering
\begin{tikzcd}[math mode=false, text width=9em, align=center, row sep=6em]
	linear-BD-uniform $\para\AC^{0\ }$
		\arrow[r, bend left, "\cref{lem:linear-bd-implies-log-d}" above]
	& logtime-D-uniform $\para\AC^{0\ }$
		\arrow[r, bend left, "\cref{lem:logtime-d-implies-fo-d}" above]
		\arrow[l, dashed, bend left, "\cref{lem:linear-bd-uniform-para-ac0=logtime-d-uniform-para-ac0}"
			{name = first, below}]
	& \FO-D-uniform $\para\AC^{0\ }$
		\arrow[ll, dashed, controls={+(-0.5,-1.2) and +(0.5,-1.2)}, shift left=2, "\cref{lem:fo-d-implies-linear-bd}"
			{name = second}]
\end{tikzcd}
\caption{Visualization of the steps in the proof of \cref{thm:uniform-para-ac0^} for $\para\AC^0$.
Solid arrows are direct implications on the level of circuit families,
dashed arrows indicate implications up-to similar-size circuit equivalence.
\cref{lem:linear-bd-uniform-para-ac0=logtime-d-uniform-para-ac0} is used to prove \cref{lem:fo-d-implies-linear-bd}.
The proof for $\para\AC^{0\up}$ follows the same structure.
}
\label{fig:outline}
\end{figure}

The new inclusions will be proved by simulation.
We will show that there are linear-BD-uniform constant-depth circuit families
for deciding the connection language $L_D(C)$
of logtime-D-uniform and \FO-D-uniform circuit families $C$ of size $O(f(k)n^c)$.
We then compose them into larger linear-BD-uniform circuit families to simulate $C$ itself.

To prove that these compositions are indeed linear-BD-uniform,
we first show that the class of linear-BD-uniform circuit families is in some sense closed under substitution.

\subsection{Substitution}\label{subsec:substitution}
A main ingredient of our construction is that within linear-BD-uniformity we are able to
\emph{compose} circuit families.
That is, given uniform circuit families $A$ and $B$,
we can uniformly replace some of the gates in $A$ by circuits from $B$.
This can be formalized as follows.

\begin{definition}\label{def:substitution}
	Let $A,B$ be parameterized circuit families,
	let $M(G,n,k): \N^3 \to 2$
	be a function marking internal gates in $A_{n,k}$,
	and let $\fanin(G,n,k)$ be the fan-in of gate $G$ in $A_{n,k}$.
The \defn{substitution} $A[B/M]$
	is the parameterized circuit family where $A[B/M]_{n,k}$
	consists of subcircuits $C(G)$ for each gate $G$ in $A_{n,k}$ so that
	\begin{enumerate}
		\item $C(G) = G$ if $M(G,n,k) = 0$,
		\item $C(G) = B_{\fanin(G,n,k),k}$ if $M(G,n,k) = 1$
			(input and output gates become unary \gor/ gates),
		\item and if $H$ is the $i$th input of $G$ in $A_{n,k}$,
			then $C(H)$ is the $i$th input of $C(G)$ in $A[B/M]_{n,k}$.
	\end{enumerate}
	That is, $A[B/M]_{n,k}$ is obtained from $A$ by replacing all marked gates with circuits of $B$.
\end{definition}

It can be shown that
$A[B/M]$ is linear-BD-uniform if both $A$ and $B$ are,
and in addition $M$ and $\fanin$ are reasonably computable.

\begin{restatable}{lemma}{substitutionLemma}\label{lem:substitution}
	Let $A,B$ be linear-BD-uniform parameterized circuit families
	of size $O(f(k)n^c)$ for some computable $f$.
	Let $M(G,n,k): \N^3 \to 2$ be a function marking internal gates in $A_{n,k}$,
	and let $\fanin(G,n,k)$ be the fan-in of gate $G$ in $A_{n,k}$.

	Now $A[B/M]$ is linear-BD-uniform, provided that there is a computable $h$ such that:
	\begin{enumerate}[(1)]
		\item $M(G,n,k)$ is computable in time $|G|+|n|+h(k)$, and
		\item if $M(G,n,k) = 1$ then $\fanin(G,n,k)$ is computable in time $|G|+|n|+h(k)$.
	\end{enumerate}
\end{restatable}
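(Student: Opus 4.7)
The idea is to exhibit a multitape Turing machine that decides $L_{BD}(A[B/M])$ in time $O(|w| + g(k))$ by reducing connection queries in $A[B/M]$ to queries in $L_{BD}(A)$ and $L_{BD}(B)$, together with calls to $M$ and $\fanin$. I start by fixing an admissible numbering: each gate of $A[B/M]_{n,k}$ is represented by a tagged string of length $O(\log n + \log f(k))$. Tag $0$ followed by a padded $A$-gate number $G_A$ represents an unmarked gate; tag $1$ followed by a padded pair $(G_A, H_B)$ represents the gate $H_B$ in the copy of $B_{\fanin(G_A,n,k),k}$ substituted at the marked internal gate $G_A$. Since neither inputs nor outputs of $A$ are marked, I can place the input (and, if needed, output) gates of $A[B/M]$ in the low range required by \cref{def:numbering} by reshuffling within the tag-$0$ block.

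\textbf{Decision procedure.} Given $\langle G, a, p, n, k \rangle$, the machine parses the tag of $G$ in linear time. If the tag is $0$, so $G$ encodes an $A$-gate $G_A$, it first checks $M(G_A, n, k) = 0$ and rejects otherwise. A type query ($p = \epsilon$) reduces to a direct call to $L_{BD}(A)$ on $\langle G_A, a, \epsilon, n, k \rangle$. A fan-in query uses $L_{BD}(A)$ to identify the $p$-th predecessor $H_A$ of $G_A$ in $A$; one further call to $M$ on $H_A$ decides whether the true predecessor in $A[B/M]$ is $(0, H_A)$ or the output gate of the substituted copy of $B_{\fanin(H_A, n, k), k}$, whose number $\text{out}_B$ is determined once $\fanin(H_A, n, k)$ has been computed. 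If the tag is $1$, so $G = (1, G_A, H_B)$, the machine first computes $m = \fanin(G_A, n, k)$ and then splits on whether $H_B$ is an input of $B_{m,k}$, its output, or an internal gate. For input and output gates (unary \gor/ gates by construction) the single predecessor is determined either by one call to $L_{BD}(A)$ at $G_A$ -- redirected through another substituted $B$-output via one more call to $M$ when necessary -- or by one call to $L_{BD}(B)$ at the output of $B_{m,k}$. For internal gates, the type and the $p$-th predecessor $J_B$ are read off from $L_{BD}(B)$ and $J_B$ is reported as $(1, G_A, J_B)$, with the input-gate case of $B$ reinterpreted as above so that a wire entering a $B$-input is always reported as an $A$-side source.

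\textbf{Time analysis and main obstacle.} Each subroutine query has length $O(|w|)$ and, by the hypotheses, runs in time $O(|w| + f_A(k))$, $O(|w| + f_B(k))$, or $O(|w| + h(k))$ respectively. Only a constant number of such calls are issued per input, so the total running time is $O(|w| + g(k))$ for some computable $g$, which gives linear-BD-uniformity of $A[B/M]$. I expect the main subtlety to lie in the \emph{border} bookkeeping around the input and output gates of substituted $B$-copies: each such gate is a unary \gor/ whose single wire crosses between the $A$-structure and a $B$-structure, and the reduction has to detect these crossings from the parsed tag and resolve them with the appropriate sub-language query, all while keeping the tagged numbering admissible and decodable within the linear-time budget.
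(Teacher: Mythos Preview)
Your overall strategy---encode gates of $A[B/M]$ by a tagged pair, reduce queries on $L_{BD}(A[B/M])$ to a bounded number of calls to $L_{BD}(A)$, $L_{BD}(B)$, $M$, and $\fanin$, and handle the border gates of each substituted $B$-copy by a small case analysis---matches the paper's proof. The numbering you propose is essentially the paper's $\langle 0,G\rangle$ / $\langle G,G'\rangle$ scheme, and the time accounting is the same.

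There is, however, a real gap in the way you describe the connection-query reduction. You write that a fan-in query ``uses $L_{BD}(A)$ to identify the $p$-th predecessor $H_A$ of $G_A$'', and later that ``the $p$-th predecessor $J_B$ [is] read off from $L_{BD}(B)$''. Linear-BD-uniformity of $A$ gives you only a \emph{decision} procedure for $L_{BD}(A)$: on input $\langle G_A,H,p,n,k\rangle$ it tells you whether $H$ is the $p$-th predecessor of $G_A$. It does not let you \emph{compute} $H_A$ from $G_A$ and $p$ in time $O(|w|+g(k))$; there are $f(k)n^c$ candidate gate numbers, and nothing in the hypothesis lets you search them. As written, your procedure first computes the predecessor and then compares it to $a$, and that first step is not justified.

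The fix is exactly what the paper does: do not compute the predecessor, \emph{verify} the candidate already supplied in the input. Parse $a$ according to your tag convention---either $a=(0,H_A)$ or $a=(1,H_A,H'_B)$---extract $H_A$ (and $H'_B$) from it, and then issue the single decision query $\langle G_A,H_A,p,n,k\rangle\in L_{BD}(A)$ together with the checks $M(H_A,n,k)=0$ (first case) or $M(H_A,n,k)=1$ and $H'_B=\text{out}_{B_{\fanin(H_A,n,k),k}}$ (second case). The analogous adjustment handles the $B$-internal and border cases. With this change your argument goes through and coincides with the paper's.
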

\begin{proof}[Proof sketch]
	It suffices to define an admissible numbering for $D= A[B/M]$,
	and show that,
	under this numbering,
	$(L_{BD}(D), \pi_4)$ is in $\DTIME(n + f(k))$ for some computable $f$.

	Give unmarked internal gates $G$ in $A_{n,k}$ number $\langle 0, G \rangle$ in $D_{n,k}$.
	If $G$ is a marked gate and $G'$ is a gate in the circuit $B_{\fanin(G,n,k),k}$ replacing it,
	number it $\langle G, G' \rangle$ in $D_{n,k}$.
	This numbering based on admissible numberings of $A$ and $B$ leads to an admissible numbering of $D$ (cf. \cref{def:numbering}).
	An $O(n + f(k))$-time algorithm deciding $(L_{BD}(D), \pi_4)$ is then obtained
	from $M$, $\fanin$, and $O(n + g(k))$-time algorithms for $L_{BD}(A)$ and $L_{BD}(B)$
	witnessing the linear-BD-uniformity of $A$ and $B$.
	A full proof can be found in \autoref{app:proofs}.
\end{proof}

We do not know whether the logtime-D-uniform (parameterized) circuit families are closed under substitution,
for reasonable assumptions on $M$ and $\fanin$.
It appears that two
random-access Turing machines $P,Q$ are not easily composed
without a runtime overhead.
For example, computing $P(x_r)$ on input $x = \langle x_l, x_r \rangle$
seems to require offsetting each query that $P$ makes by an $O(|x_l|)$ offset
which, if done naively, adds an $O(\log |x_l|)$ overhead to every query.
These
naive combinations of two logtime random-access Turing machines
can result in $\Omega((\log n)^2)$-time random-access Turing machines,
instead of logtime.

\subsection{Simulation}\label{subsec:simulation}
We are now ready to prove \cref{thm:uniform-para-ac0^} according to the outline given at the start of the section.
The first step is the indirect inclusion from logtime-D-uniformity to linear-BD-uniformity.

\begin{lemma}\label{lem:linear-bd-uniform-para-ac0=logtime-d-uniform-para-ac0}
Let $C$ be a parameterized circuit family of size $O(f(k)n^c)$ and depth bounded by $d(k)$,
	with $f,d$ computable.
	If $C$ is logtime-D-uniform,
	there is an equivalent linear-BD-uniform family $D$ of size $O(f'(k)n^{c'})$ and depth $O(d(k))$,
	with $f'$ computable.
\end{lemma}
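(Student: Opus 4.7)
The strategy is a layer-by-layer simulation of $C$ by an explicit family $D$ whose gate graph can be read off in linear time, assembled using \cref{lem:substitution}. The key ingredient is a linear-BD-uniform constant-depth circuit family $E$ that decides $(L_D(C), \pi_4)$: such an $E$ exists because logtime-D-uniformity of $C$ places $(L_D(C), \pi_4)$ in $\RTIME(\log n + g(k))$, which by \cref{lem:logtime-linear-bd-ac0} is contained in linear-BD-uniform $\para\AC^0$. Fix such an $E$, of size $O(g'(k) n^{c_E})$ and some constant depth.

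I would construct $D_{n,k}$ as $d(k)+1$ stacked layers; layer $\ell$ carries, for every potential gate number $G$ of $C_{n,k}$, a bit $v_{\ell, G}$ intended to equal the value of gate $G$ after $\ell$ simulation steps. Layer $0$ loads the input: input gate $G$ receives $x_G$ and the remaining $v_{0,G}$ are set to $0$. For $\ell > 0$, the circuit contains, per $G$, an aggregation gadget that uses copies of $E$ on the hardwired queries $\langle G, H, p, 0^n, 0^k \rangle$ (for each candidate predecessor $H$ and input-position $p$) together with separate copies of $E$ to read off the type of $G$, and outputs the correct \gate{and}/\gate{or}/\gate{not} combination of the previous-layer values $v_{\ell-1, H}$. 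The output of $D_{n,k}$ is $v_{d(k), G^\star}$ for the output gate $G^\star$ of $C_{n,k}$; an easy induction on depth shows that $v_{\ell, G}$ equals the true value of gate $G$ whenever $\ell$ is at least the depth of $G$, so correctness follows from $C$ having depth at most $d(k)$. The resulting size is $O(d(k) \cdot s^2 \cdot |E|) = O(f'(k) n^{c'})$ and the depth is $O(d(k))$.

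For linear-BD-uniformity of $D$, I would present it as a substitution $A[E/M]$ in the sense of \cref{def:substitution}, where $A$ is a skeleton whose gates are indexed by tuples $(\ell, G, j)$ encoding layer, simulated gate, and intra-gadget role, with placeholder gates marked by $M$ standing in for each copy of $E$ (their input wires being hardwired to the bits of the corresponding query). The skeleton $A$ is linear-BD-uniform because its connection language is pure arithmetic on the tuple indices, and $M$ and $\fanin$ are likewise simple arithmetic functions of the index that fit within the time budget of \cref{lem:substitution}. Invoking that lemma then yields linear-BD-uniformity of $D$.

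The main obstacle is the bookkeeping in the substitution step: choosing an admissible numbering for $D$ that is simultaneously compatible with the skeleton's tuple indexing and with the internal numberings of the substituted $E$-instances, while keeping each placeholder's fan-in (and hence the length of its hardwired input) easily computable from its index. Once this is pinned down, the size, depth, and uniformity claims follow directly from \cref{lem:substitution} together with the size/depth arithmetic above.
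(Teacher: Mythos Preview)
Your proposal is correct and follows essentially the same route as the paper: a layered simulation in which each layer carries one ``simgate'' per potential gate number of $C_{n,k}$, with the simgate internals querying $L_D(C)$ via hardwired inputs to the linear-BD-uniform family $E$ obtained from \cref{lem:logtime-linear-bd-ac0}, and with linear-BD-uniformity of the whole construction established through \cref{lem:substitution}. The paper's Figure~2 and the surrounding discussion spell out exactly the aggregation gadget you describe (including the $\gate{or}$ over input positions $p$ that you also note), and the induction on level and the final output-forwarding subcircuit match your plan.
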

\begin{proof}
	Let $a$ be a constant and $h$ a computable function so that for all $n,k$
	we have that $N(n,k) = 2^{a\log n + h(k)}$ is greater than any gate number in $C_{n,k}$ (cf. \autoref{def:numbering}).
	Then the binary representation of $N(n,k)$ is computable in $O(\log n + g(k))$ time without random access for some computable $g$.

	We describe a linear-BD-uniform circuit family $D$ equivalent with $C$.
	Circuit $D_{n,k}$ consists of its input gates,
	$d(k)$ layers of $N(n,k)$ subcircuits we call \defn{simgates},
	and one final subcircuit that forwards the correct simgate to the output gate.
	Every simgate of layer $m{+}1$ takes all simgates of layer $m$ as input,
	together with the input gates.
	We number the $q$th simgate in layer $m$ with $\langle m, q \rangle$,
	and the idea is that simgate $\langle m, q \rangle$ in $D_{n,k}$ approaches (monotonically)
	gate $q$ in $C_{n,k}$,
	so that, by layer $d(k)$, simgate $\langle d(k), q \rangle$ correctly simulates gate $q$ in $C_{n,k}$.

	\begin{figure}[b!]\centering
\begin{tikzpicture}[scale=0.75,
			every node/.style={transform shape, draw, minimum height = 2em},
			every label/.style={draw=none, color=lipicsGray, font = \scriptsize},
			label position = below,
			label distance = -1mm,
			phi/.style={rounded corners, dashed}, grow'=up, <-]
			\path [level 1/.style={sibling distance=35mm},
				level 2/.style={sibling distance=25mm}]
				node [label={$\langle m, q, 0 \rangle$}] (base) {\gor/}
				child { node[label={$\langle m, q, 1 \rangle$}] (simOR) {\gand/}
					child {node[phi, label={$\langle m, q, 4 \rangle$}] {$q$ is \gor/?}}
					child[missing]
				}
				child { node[label={$\langle m, q, 2 \rangle$}] {\gand/}
					child { node[label={[xshift=2mm]$\langle m, q, 5 \rangle$}] (simNOT) {\gnot/}}
					child {node[phi, label={[xshift=-2mm]$\langle m, q, 6 \rangle$}] (q_is_not) {$q$ is \gnot/?}}
				}
				child { node[label={$\langle m, q, 3 \rangle$}] (simAND) {\gand/}
					child[missing]
					child {node[phi, label={[xshift=-2mm]$\langle m, q, 7 \rangle$}] {$q$ is \gand/?}}
				};

\path
				node [label={$\langle m, q, 8 \rangle$}, above=10mm of simNOT] (OR) {\gor/}
				edge [->] (simOR.north)
				edge [->] (simNOT.north)
				node [label={$\langle m, q, 9 \rangle$}] at (q_is_not |- OR) (AND) {\gand/}
				edge [->] (simAND.north);

			\path [sibling distance=20mm]
				node [above=of OR, xshift=-2.5cm, label={$\langle m, q, 10, i \rangle$}] (I0) {\gand/}
				child[missing]
				child {node (input_n) {input $i$}}
				child {node[phi, label=above:{$\langle m, q, 14, i \rangle$}] (q_takes_n) {$q$ takes $i$?}}
				;

\path
				node [label={$\langle m, q, 11, i \rangle$}] at (q_takes_n |- I0) (I0o) {\gor/}
				edge (input_n)
				edge (q_takes_n);

			\path [sibling distance=20mm]
				node [right=5cm of I0, label={$\langle m, q, 12, p \rangle$}] (p) {\gand/}
				child[missing]
				child {node (gate_p) {$\langle m - 1, p \rangle$}}
				child {node[phi, label=above:{$\langle m, q, 15, p \rangle$}] (q_takes_p) {$q$ takes $p$?}}
				;

\path
				node [label={$\langle m, q, 13, p \rangle$}] at (q_takes_p |- p) (po) {\gor/}
				edge (gate_p)
				edge (q_takes_p);

\node[draw=none, below left=0.5cm and 0.25cm of input_n] (dotsl) {$\dots$};
			\node[draw=none, below right=0.5cm and 0.25cm of q_takes_n] (dotsl) {$\dots$};
			\node[draw=none, below left=0.5cm and 0.25cm of gate_p] (dotsl) {$\dots$};
			\node[draw=none, below right=0.5cm and 0.25cm of q_takes_p] (dotsl) {$\dots$};

\path ([xshift=-0.1cm]OR.north) edge (I0.south)
				([xshift=0.1cm]OR.north) edge (p.south)
				([xshift=-0.1cm]AND.north) edge (I0o.south)
				([xshift=0.1cm]AND.north) edge (po.south);
		\end{tikzpicture}
		\caption{
			The internals of simgate $\langle m, q \rangle$ simulating gate $q$ of $C_{n,k}$,
			annotated with an admissible numbering.
			It takes the $n$ input gates and the $N(n,k)$ simgates of the previous layer as input.
			The boxes with rounded corners represent subcircuits that
			determine how to simulate $q$ by
			querying $L_D(C)$.
			If there is no gate numbered $q$ in $C_{n,k}$,
			then the output of the simgate is 0.
		}
		\label{fig:simgate}
	\end{figure}
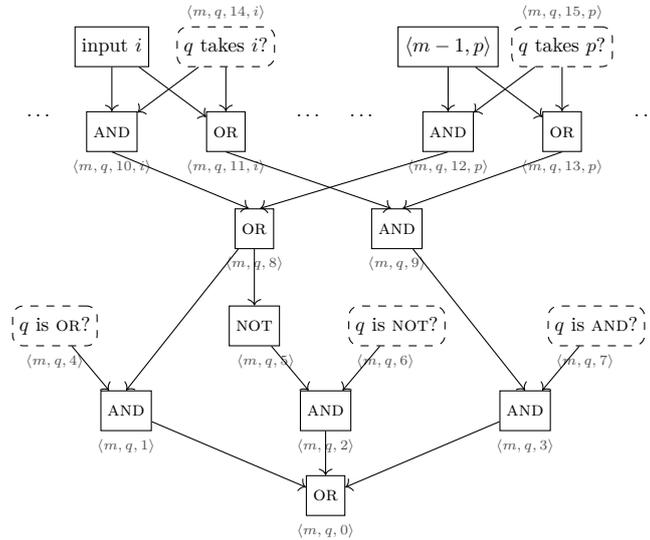

	See \autoref{fig:simgate} for the internals of a simgate.
	In order to approach gate $q$ of $C_{n,k}$,
	simgate $\langle m, q \rangle$ queries $L_D(C)$
	to select the simgates $\langle m{-}1, p \rangle$ where $p$ feeds into $q$ in $C_{n,k}$,
	and the input gates $m$ where input gate $m$ feeds into $q$ in $C_{n,k}$.
	It also queries $L_D(C)$ to determine whether $q$ is an \gand/, \gor/, or \gnot/ gate,
	and simulates that action on its selected inputs.
	(By our construction in \cref{fig:simgate},
	if $q < N(n,k)$ is not a gate in $C_{n,k}$ then simgate $\langle m, q \rangle$ outputs 0,
	but it has no influence on the output of $D_{n,k}$, because it is never selected.)

	Assume for now that components of simgates can directly answer membership
	queries for $L_D(C)$.
In this case the construction shows that they select their input as described above,
	and perform the action of $q$ (if $q$ is a gate in $C_{n,k}$).
Then by induction, for every gate $q$ in $C_{n,k}$,
	the simgate $\langle d(k), q \rangle$ in the final layer will have the same output
	as $q$.

	To see this, first observe that by our assumption,
	the output of $\langle m+1,q\rangle$ is the same as that of $q$
	if for all internal gates $p$ that feed into $q$, the output of $\langle m,p \rangle$ is equal to that of $p$.
Define the \defn{level} of a gate to be the length of the longest path from it to an input gate.
	The output of simgates $\langle m, q \rangle$ is the same as that of $q$
	if the level of $q$ is below $m$.
	This can be shown by induction based on the following:
	(1) If $q$ has level 1 in $C_{n,k}$,
	then $q$ only takes input gates,
	and by our observation simgates $\langle m, q \rangle$ have the same output as $q$ for all $m \geq 1$.
	(2) If $q$ has level $m+1$ and simgates $\langle m',p \rangle$ have the same output as $p$ for all $p$ of lower level and $m' \geq m$,
	then by our observation $\langle m'', q\rangle$ has the same output as $q$ for all $m'' \geq m + 1$.

	Finally, to make $D_{n,k}$ behave as desired,
	it should feed into the output gate the output of simgate $\langle d(k), q \rangle$
	for the gate $q$ that feeds into the output gate in $C_{n,k}$.
	This is the role of the final subcircuit mentioned at the start of the proof.
	It takes all simgates in the final layer as input,
	and queries $L_D(C)$ to propagate the output of the correct $\langle d(k), q \rangle$.

	The circuits $D_{n,k}$ have very regular structure.
	Each $D_{n,k}$ is essentially a matrix of $d(k)$ layers of $N(n,k)$ simgates,
	and each simgate gets all the simgates of the previous layer as input together with the input gates.
	Apart from the rounded query blocks,
	the internals of the simgates are also easily seen to be linear-BD-uniform,
	using for example the admissible numbering in \autoref{fig:simgate}.
	It remains to show how the queries in the simgates are implemented.
	\proofsubparagraph{Simgate queries.}
	The $L_D(C)$ queries represented by the dashed query blocks in \autoref{fig:simgate} can be implemented
	by plugging in linear-BD-uniform circuits for $L_D(C)$.
	A constant-depth linear-BD-uniform circuit family $E$ deciding $L_D(C)$ exists by \autoref{lem:logtime-linear-bd-ac0},
	and its circuits can be substituted in linear-BD-uniformly by \autoref{lem:substitution}.
It is left to prove that the circuits can be given the appropriate query inputs
	in a linear-BD-uniform way.

	For now, we work with the circuit family $D'$ in which the circuits of $E$ have not yet been substituted.
	Here gates $\langle m, q, 4 \rangle$ (cf. \cref{fig:simgate}) are just \gand/ gates
	that will be replaced with circuits from $E$ later to obtain $D$ from $D'$.
	We add constant circuits $V_0$ and $V_1$ to every $D'_{n,k}$, where
$V_0$ has the constant output 0, and $V_1$ is constant 1.
The circuit of $E$ replacing $\langle m, q, 4 \rangle$ should decide whether gate $q$ in $C_{n,k}$ is an \gor/ gate.
	This can be achieved by giving gate $\langle m, q, 4 \rangle$ the input
	$I = \langle q, 1, \epsilon, 1^n, 1^k \rangle$.
	That is, in terms of connection languages,
	we want $w = \langle \langle m, q, 4 \rangle, V_b, i, n, k \rangle \in L_{BD}(D')$
	if and only if the $i$th bit of $I$ is $b$.
	It can be verified that, given such $w$,
	the $i$th bit of $I$ can be decided in time $|w| + g(k)$ for some computable $g$,
	and so these queries can be answered in time $|w| + g(k)$ as required for linear-BD-uniformity.

	This addresses the type-queries in the simgates.
	The other queries, those asking whether (input) gate $p$ feeds into $q$ in $C_{n,k}$,
	are handled similarly,
	with a small complication: 
	checking whether $p$ feeds into $q$ can be reduced to checking whether $p$ is the $i$th input of $q$
	for any $i \leq N(n,k)$,
	so the dashed blocks in \cref{fig:simgate} corresponding to this type of query
	are actually \gor/ gates over $N(n,k)$ gates
	that will be substituted by circuits of $E$,
	where the $i$th gate will be given an $L_D(C)$-query to determine whether $p$ is the $i$th input of $q$.

In summary: the queries in the simgates are implemented
	by hard-wiring $L_D(C)$-queries into the query blocks in a linear-BD-uniform manner,
	and then substituting in a linear-BD-uniform circuit family $E$ that decides $L_D(C)$ to answer them.
	The resulting circuit family $D$ is linear-BD-uniform by \cref{lem:substitution},
	and is equivalent with $C$ by design.
\end{proof}

\cref{thm:regan-vollmer} (mentioned in \cref{sec:uniform-ac0}) can be proved similarly.
Using simgates, we can now also prove the following.

\begin{lemma}\label{lem:fo-d-implies-linear-bd}
	Let $C$ be a parameterized circuit family of size $O(f(k)n^c)$ and depth bounded by $d(k)$,
	with $f,d$ computable.
	If $C$ is \FO-D-uniform,
	there is an equivalent linear-BD-uniform family $D$ of size $O(f'(k)n^{c'})$ and depth $O(d(k))$,
	with $f'$ computable.
\end{lemma}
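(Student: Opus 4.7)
My plan is to reuse the simgate construction from the proof of \cref{lem:linear-bd-uniform-para-ac0=logtime-d-uniform-para-ac0} almost verbatim. That construction builds, from a $d(k)$-depth $C$, an equivalent linear-BD-uniform family consisting of a $d(k)$-layer grid of simgates, each of which dynamically simulates an arbitrary gate in $C_{n,k}$ by consulting a constant-depth linear-BD-uniform circuit family $E$ that decides $L_D(C)$. The only place logtime-D-uniformity of $C$ was used in that proof was in extracting such an $E$ via \cref{lem:logtime-linear-bd-ac0}. So the task here reduces to producing the same kind of $E$ under the weaker hypothesis that $C$ is only \FO-D-uniform.

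\textbf{Obtaining $E$.} Under \FO-D-uniformity, $(L_D(C), \pi_4) \in \para\FO$ by definition. I would then invoke the chain of equalities
\[
	\para\FO = \Para(\FO) = \Para(\text{linear-BD-uniform } \AC^0) = \text{linear-BD-uniform } \para\AC^0,
\]
which holds by \cref{thm:uniform-ac0,thm:regan-vollmer,thm:chen} and the definition of $\para\FO$. This is the same chain used inside \cref{lem:logtime-d-implies-fo-d}, now read in the opposite direction. It yields a constant-depth linear-BD-uniform parameterized circuit family $E$ of size $O(g(k)m^{c''})$ (in the length $m$ of its input) that decides the connection language $L_D(C)$.

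\textbf{Assembly and bookkeeping.} Given $E$, I would copy the simgate scaffold from the proof of \cref{lem:linear-bd-uniform-para-ac0=logtime-d-uniform-para-ac0}, hard-wire the query blocks with $L_D(C)$-queries of the form $\langle q, a, p, 1^n, 1^k \rangle$ in the same linear-BD-uniform way as in that proof, and substitute $E$ into the dashed blocks via \cref{lem:substitution}. The hypotheses of \cref{lem:substitution} are satisfied because the marking function (select the query blocks) and the fan-ins of those blocks are trivially computable in time $|G| + |n| + h(k)$, and both the scaffold and $E$ are linear-BD-uniform of size $O(\text{poly}(n) \cdot \text{computable}(k))$. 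The resulting $D$ has $d(k)$ layers of $2^{a\log n + h(k)}$ simgates, each of constant depth after substitution, so $D$ has depth $O(d(k))$ and size $O(f'(k)n^{c'})$ for some computable $f'$. Correctness is identical to that of \cref{lem:linear-bd-uniform-para-ac0=logtime-d-uniform-para-ac0}.

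\textbf{Main obstacle.} There is essentially no new obstacle beyond what was already overcome in \cref{lem:linear-bd-uniform-para-ac0=logtime-d-uniform-para-ac0}; the one point worth double-checking is that the $E$ inherited from $\para\AC^0$ really is constant-depth, so that substituting it into the simgates does not inflate the $O(d(k))$ depth bound. This is immediate from the definition of $\para\AC^0$, which fixes depth to be constant (not a function of $k$).
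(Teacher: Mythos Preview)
Your proposal is correct and matches the paper's own proof almost exactly: both obtain a constant-depth linear-BD-uniform family $E$ deciding $(L_D(C),\pi_4)$ and then plug it into the simgate scaffold of \cref{lem:linear-bd-uniform-para-ac0=logtime-d-uniform-para-ac0} via \cref{lem:substitution}. The only cosmetic difference is the route to $E$: the paper passes through $\Para(\text{logtime-D-uniform }\AC^0)$ and then invokes \cref{lem:linear-bd-uniform-para-ac0=logtime-d-uniform-para-ac0} once more on $L_D(C)$, whereas you go directly to $\Para(\text{linear-BD-uniform }\AC^0)$ via \cref{thm:regan-vollmer}; both unwind the same chain of equalities already displayed in the proof of \cref{lem:logtime-d-implies-fo-d}.
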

\begin{proof}
	By the \FO-D-uniformity of $C$, we have $(L_D(C), \pi_4) \in \Para(\FO)$.
	\autoref{thm:uniform-ac0} gives that $(L_D(C), \pi_4)$ is in $\Para(\text{logtime-D-uniform $\AC^0$})$,
	which by \autoref{thm:chen} is equal to logtime-D-uniform $\para\AC^0$.
	Hence, by \autoref{lem:linear-bd-uniform-para-ac0=logtime-d-uniform-para-ac0},
	$(L_D(C), \pi_4)$ is in linear-BD-uniform $\para\AC^0$.
	The rest of the proof is now the same as for \cref{lem:linear-bd-uniform-para-ac0=logtime-d-uniform-para-ac0},
	using a simgate-construction of $d(k)$ layers,
	and answering $L_D(C)$-queries in the simgates by substituting in a family witnessing
	that $(L_D(C), \pi_4)$ is in linear-BD-uniform $\para\AC^0$.
\end{proof}

We have now proved our main result.

\begin{proof}[Proof of Theorem~\ref{thm:uniform-para-ac0^}]
	For both $\para\AC^0$ and $\para\AC^{0\up}$,
	Item (\ref{thm:uniform-para-ac0^:1}) implies Item (\ref{thm:uniform-para-ac0^:2}) by \autoref{lem:linear-bd-implies-log-d}, and (\ref{thm:uniform-para-ac0^:2}) implies (\ref{thm:uniform-para-ac0^:3}) by \autoref{lem:logtime-d-implies-fo-d}. Finally, Item~(\ref{thm:uniform-para-ac0^:3}) implies Item~(\ref{thm:uniform-para-ac0^:1}) by \autoref{lem:fo-d-implies-linear-bd}.
	(For $\para\AC^0$, the last implication is obtained from \cref{lem:fo-d-implies-linear-bd} by setting $d(k) = O(1)$.)
\end{proof}

We can also observe the following,
which suggests that $\FO$-D-uniformity is not too coarse for $\para\AC^{0\up}$.
(This mirrors that (\FO-uniform $\AC^0$)-uniform $\AC^0$ equals $\FO$-uniform $\AC^0$.)

\begin{corollary}
	(\FO-D-uniform $\para\AC^{0\up}$)-uniform $\para\AC^{0\up}$ = \FO-D-uniform $\para\AC^{0\up}$.
\end{corollary}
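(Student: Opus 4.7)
The plan is to prove the two inclusions separately. The $\supseteq$ direction is essentially free, while $\subseteq$ reuses the simgate construction from the proof of \cref{lem:linear-bd-uniform-para-ac0=logtime-d-uniform-para-ac0}.

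For $\supseteq$, I would take any family $C$ witnessing $L \in \FO$-D-uniform $\para\AC^{0\up}$, so that $(L_D(C), \pi_4) \in \para\FO$ by definition. Applying \cref{lem:fo-d-implies-linear-bd} with $d(k) = O(1)$ (together with \cref{lem:linear-bd-implies-log-d,lem:logtime-d-implies-fo-d}) gives $\FO$-D-uniform $\para\AC^0 = \para\FO$, and thus $\para\FO \subseteq \FO$-D-uniform $\para\AC^{0\up}$. Hence $C$ already witnesses $L \in (\FO$-D-uniform $\para\AC^{0\up})$-uniform $\para\AC^{0\up}$.

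For $\subseteq$, given $C$ of depth $d(k)$ and size $O(f(k)n^c)$ in $\para\AC^{0\up}$ with $(L_D(C), \pi_4) \in \FO$-D-uniform $\para\AC^{0\up}$, I would first invoke \cref{thm:uniform-para-ac0^} to convert the outer uniformity into linear-BD-uniformity: this yields a linear-BD-uniform family $E \in \para\AC^{0\up}$ of depth $O(g(k))$ that decides $L_D(C)$. I would then repeat the simgate construction of \cref{lem:linear-bd-uniform-para-ac0=logtime-d-uniform-para-ac0}, but answer each $L_D(C)$-query block by substituting in $E$ via \cref{lem:substitution}. The resulting family $D$ has $d(k)$ layers of simgates, each of depth $O(g(k))$, so $D$ has total depth $O(d(k) \cdot g(k))$, a computable function of $k$, and thus lies in $\para\AC^{0\up}$. \cref{lem:substitution} gives that $D$ is linear-BD-uniform, hence by \cref{lem:linear-bd-implies-log-d,lem:logtime-d-implies-fo-d} also $\FO$-D-uniform. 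Since $D$ decides the same language as $C$, the problem $L$ lies in $\FO$-D-uniform $\para\AC^{0\up}$.

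The main obstacle to watch is that, unlike in the proof of \cref{lem:linear-bd-uniform-para-ac0=logtime-d-uniform-para-ac0}, where the substituted connection-language circuits from \cref{lem:logtime-linear-bd-ac0} had constant depth, here the family $E$ has $k$-dependent depth $O(g(k))$. This is harmless because $\para\AC^{0\up}$ permits any computable $k$-dependent depth, but it does need to be stated explicitly when checking that $D$ still lies in $\para\AC^{0\up}$. The remaining hypotheses of \cref{lem:substitution} --- the time bounds on the marking function $M$ and the fan-in function $\fanin$ --- follow from the very regular block structure of the simgates (cf. \cref{fig:simgate}), exactly as in \cref{lem:linear-bd-uniform-para-ac0=logtime-d-uniform-para-ac0}.
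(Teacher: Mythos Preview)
Your proposal is correct and matches the paper's approach: the paper's proof sketch likewise applies \cref{thm:uniform-para-ac0^} to obtain a linear-BD-uniform $\para\AC^{0\up}$ family for $(L_D(C),\pi_4)$ and then reruns the simgate construction of \cref{lem:linear-bd-uniform-para-ac0=logtime-d-uniform-para-ac0} with that family plugged into the query blocks. You are simply more explicit than the paper about the easy $\supseteq$ direction and about the resulting depth $O(d(k)g(k))$, both of which are handled exactly as you describe.
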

\begin{proof}[Proof sketch]
	If $(L_D(C), \pi_4)$ is in $\FO$-uniform $\para\AC^{0\up}$,
	then by \cref{thm:uniform-para-ac0^} it lies in linear-BD-uniform $\para\AC^{0\up}$.
	Follow the proof of \cref{lem:linear-bd-uniform-para-ac0=logtime-d-uniform-para-ac0},
	now using linear-BD-uniform $\para\AC^{0\up}$-families for the query blocks.
\end{proof}

\section{Extended uniformity}\label{sec:extended-uniformity}
Thus far we considered direct uniformity notions,
where the connection language only contains information on directly related gates.
Extended uniformity notions, where the connection language also describes paths between gates, are of interest because logtime-E-uniform circuit classes
such as logtime-E-uniform $\AC^0$ and $\NC^i$ are equal to natural alternating Turing machine classes \cite{vollmer_introduction_1999}.
It is unknown whether logtime-D-uniformity and logtime-E-uniformity are equivalent in general,
but it has been shown that logtime-D-uniform $\AC^0$ is equal to logtime-E-uniform $\AC^0$
\cite[Theorem 4.31]{vollmer_introduction_1999}.
We will show that this also holds for $\para\AC^0$ and, more interestingly, for $\para\AC^{0\up}$.

\begin{definition}
	Let $C$ be a circuit.
	A \defn{path} from gate $G$ to gate $a$
	is a list $p = \langle p_1, p_2, \dots, p_m \rangle$
	such that there exist $G_0, \dots, G_m, G_{m+1}$ in $C$
	where for each $i$, $G_{i+1}$ is the $p_{i+1}$th predecessor (input) of $G_i$,
	and where $G_0 = G$ and $G_{m+1} = a$.
\end{definition}

In the extended connection language,
the $p$ in $\langle G, a, p, z, z' \rangle$
are paths from $G$ to $a$ \cite{vollmer_introduction_1999}.

\begin{definition}\label{def:extended-connection-language-para}
	Let $C$ be a parameterized circuit family.
	The \defn{extended connection language} $L_E(C)$
	consists of all strings of the form
	$\langle G, a, p, z, z' \rangle$
	where, writing $n$ for $|z|$ and $k$ for $z'$:
	\begin{enumerate}
		\item $G$ is a gate in $C_{n,k}$;
		\item if $p = \epsilon$
			then $a < 3$ indicates the type of $G$
			(one of \gand/, \gor/, \gnot/);
		\item otherwise, $p$ is a path from $G$ to $a$.
	\end{enumerate}
\end{definition}

In \cite[p. 123]{vollmer_introduction_1999},
Vollmer defines logtime-E-uniformity for
non-parameterized polynomial-size constant-depth circuit families $C$
as $L_E(C)$ being in $\RTIME(\log n)$.
We adapt this to $O(d(k))$-depth parameterized circuit families as follows.

\begin{definition}
	A parameterized circuit family of size $O(f(k)n^c)$ and depth $O(f(k))$
	is \defn{logtime-E-uniform} if $(L_E(C), \pi_4)$ is $\RTIME(\log n + g(k))$ for a computable $g$.
\end{definition}

For $\para\AC^{0\up}$,
this definition is quite strict:
there are logtime-D-uniform circuit families of size $O(f(k)n^c)$ and depth $d(k)$
that contain valid paths $p$ of $d(k)$ steps,
where every step has size $\Omega(\log n)$.
The encoding of such paths $p$ has length $\Omega(d(k)\log n)$.
No $\RTIME(\log n + g(k))$ machine in \cref{def:extended-connection-language-para}
has time to read the complete path $p$.
Nevertheless,
these circuit families still have logtime-E-uniform representatives for $\para\AC^{0\up}$.

\begin{restatable}{lemma}{linearBDImpliesLogtimeE}\label{lem:linear-BD-implies-logtime-E}
	Let $A$ be a parameterized circuit family of size $O(f(k)n^c)$ and depth bounded by $d(k)$,
	with $f,d$ computable.
	If $A$ is linear-BD-uniform,
	there is an equivalent logtime-E-uniform circuit family $C$
	of size $O(f'(k)n^{c'})$ and depth $O(d(k))$,
	with $f'$ computable.
\end{restatable}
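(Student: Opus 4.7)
The plan is to take $C$ to be (a variant of) the simgate construction from the proof of \cref{lem:linear-bd-uniform-para-ac0=logtime-d-uniform-para-ac0} applied to $A$, which already produces an equivalent linear-BD-uniform family of size $O(f'(k) n^{c'})$ and depth $O(d(k))$ with a highly regular layered structure. The main task is then to strengthen this from linear-BD-uniformity to logtime-E-uniformity by choosing an admissible numbering of $C$ under which path-queries can be answered in $O(\log|w| + g(k))$ time.

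Concretely, I would equip $C$ with the natural admissible numbering $\langle m, q, \mathrm{local}\rangle$ that encodes the layer index $m \leq d(k)$, the simgate index $q$, and the local-gate-within-simgate index in fixed-width fields, so that a gate's layer and template position can be read off its number in constant time. I would then describe a random-access Turing machine that, on input $\langle G, a, p, z, z'\rangle$ with $|z|=n$ and $|z'|=k$, parses $G$ and $a$ and checks they are well-formed gate numbers; traces the path by using each element $p_i$ of $p$ (accessed via random access with the list encoding of \cref{subsec:lists}) together with the rigid simgate template to deterministically compute the successor gate at each step; and verifies that the traversal ends at $a$. Because the simgate template is fixed and encoded in the admissible numbering, the successor operation is a constant-time function of the current gate number and $p_i$ once per-step addressing is set up.

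The main obstacle will be fitting this verification into the $O(\log|w| + g(k))$ time budget. A valid path has up to $O(d(k))$ elements of $O(\log n + f(k))$ bits each, so a naive step-by-step traversal costs $\Omega(d(k)\log n)$, which exceeds the budget whenever $d(k)$ is nonconstant. Overcoming this requires exploiting the fact that paths inside a single simgate have a fixed constant length and a type-determined form, so that most of $p$ is template-predictable and only the $O(d(k))$ selector steps occurring at inter-layer transitions carry nontrivial content. The plan would be to precompute, in the $g(k)$-time parameter-dependent phase, both a description of the simgate template and the exact offsets of these selector positions inside the encoding of $p$, and then use random-access reads to spot-check only the selector contents, amortizing addressing costs across sequential reads. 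Making this amortization precise --- in particular, organizing the traversal so that the aggregate of $O(d(k))$ random-access lookups of $O(\log n + f(k))$-bit selectors fits into $O(\log|w| + g(k))$ rather than blowing up to $O(d(k)\log n)$ --- is where the bulk of the technical effort will lie.
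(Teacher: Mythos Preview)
Your overall plan---build $C$ via the simgate construction and then argue logtime-E-uniformity from its layered regularity---matches the paper's, and you correctly isolate the real obstacle: a path has up to $O(d(k))$ inter-layer selector steps of $\Theta(\log n)$ bits each, so any strategy that actually reads their \emph{contents} costs $\Omega(d(k)\log n)$, which is not $O(\log n + g(k))$ for nonconstant $d(k)$. But your proposed resolution does not escape this. ``Spot-checking the selector contents'' still means reading $\Theta(\log n)$ bits per selector you touch, and there is no amortization that collapses $d(k)$ disjoint $\Theta(\log n)$-bit reads into $O(\log n + g(k))$. Precomputing the selector offsets in a parameter-only $g(k)$ phase is also not possible: those offsets depend on the lengths of earlier steps in $p$, which are part of the input.

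The paper's key idea is to avoid reading selector contents almost entirely. Two design choices make this work. First, the number of simgates per layer is set to exactly $2^{L(n,k)}-1$, so an inter-layer step is valid if and only if its \emph{length} is at most $L(n,k)$; validity needs no content check. Second, because all simgates share the same internal template, the traversal maintains only the current layer $m$, the constant-sized within-simgate position $H$, and an \emph{address} $a_q$ into the input pointing at the step of $p$ that encodes the current simgate index---never the index itself. Each of the $O(d(k))$ updates to $(m,H,a_q)$ costs $O(\log L(n,k)) = O(\log f(k) + \log\log n)$ (read one step length, copy one $O(\log|w|)$-length address), giving $O(g(k)\log\log n)$ total; only at the very end is $a_q$ dereferenced once to read the actual $O(\log n)$-bit index and compare with $a$. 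A final case split ($g(k)\le\log\log n$ versus $g(k)>\log\log n$) shows $g(k)\log\log n \in O(\log n + g(k)^2)$, which closes the time budget. The missing ingredient in your proposal is precisely this pointer-tracking trick: you must carry the \emph{location} of the current simgate index rather than its value, and defer the single expensive dereference to the end.
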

\begin{proof}[Proof sketch]
	As in the proofs of \cref{lem:linear-bd-uniform-para-ac0=logtime-d-uniform-para-ac0,lem:fo-d-implies-linear-bd},
	$C$ will be made up of $d(k)$ layers of simgates.
	The main idea is that only steps between layers are difficult,
	and that in order to decide to which simgate a path leads,
	we only need to read the last inter-layer step.
	We achieve this by making the number of simgates per layer exactly $2^{L(n,k)}-1$ for some appropriate $L(n,k)$.
	Then, if $C$ were to consist only of these simgates,
	and the simgates were actual gates instead of subcircuits,
	it could be checked whether a path $p$ moves from simgate $\langle m, q \rangle$ to $\langle m', q' \rangle$
	by making sure that
	\begin{enumerate}
		\item each step in $p$ has length at most $L(n,k)$,
		\item $d(k) \geq m \geq m' \geq 1$ and $p$ contains $m-m'$ steps, and
		\item the last step in $p$ is $q'$.
	\end{enumerate}
	So, only the final step in $p$ has to be read.
	For the others it suffices to read their lengths.
	From this it can be shown that the whole procedure can be performed in $O(h(k) + \log n)$ time for some computable $h$.
	\autoref{app:proofs} contains a full proof.
\end{proof}

Because logtime-E-uniform families of size $O(f(k)n^c)$ and depth $f(k)$ are by definition also logtime-D-uniform
and hence (by \autoref{thm:uniform-para-ac0^})
equivalent to a linear-BD-uniform family of similar dimensions,
we may now conclude the following.

\begin{corollary}
	Logtime-E-uniform $\para\AC^{0\up}$ equals logtime-D-uniform $\para\AC^{0\up}$,
	and logtime-E-uniform $\para\AC^{0}$ equals logtime-D-uniform $\para\AC^{0}$.
\end{corollary}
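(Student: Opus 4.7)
The plan is to prove the two inclusions separately for $\para\AC^{0\uparrow}$, and then observe that the argument specializes to $\para\AC^{0}$ by fixing depth $d(k) = O(1)$.

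For the inclusion logtime-E-uniform $\para\AC^{0\uparrow}$ $\subseteq$ logtime-D-uniform $\para\AC^{0\uparrow}$, I would argue that logtime-E-uniformity directly implies logtime-D-uniformity on the same family. Given a random-access machine $M$ deciding $(L_E(C), \pi_4)$ in time $O(\log n + g(k))$, I build $M'$ that on input $\langle G, a, p, z, z' \rangle$ checks whether $p = \epsilon$ or $p$ is a single index, and in the latter case rewrites the query (conceptually) as a length-one path $\langle p \rangle$ and defers to $M$. Since the list encoding of a singleton path only adds a bounded additive overhead to the query addresses (by \cref{subsec:lists}, offsets into lists are computable in logtime with a parameterized additive term), this yields logtime-D-uniformity with the same size and depth bounds, so the family remains in $\para\AC^{0\uparrow}$.

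For the reverse inclusion, I would chain the results already proved. Given a logtime-D-uniform family $C$ of size $O(f(k)n^c)$ and depth $d(k)$ witnessing a problem in logtime-D-uniform $\para\AC^{0\uparrow}$, \cref{thm:uniform-para-ac0^} yields an equivalent linear-BD-uniform family $A$ of size $O(f'(k)n^{c'})$ and depth $O(d(k))$. Then \cref{lem:linear-BD-implies-logtime-E} converts $A$ into an equivalent logtime-E-uniform family $D$ of size $O(f''(k)n^{c''})$ and depth $O(d(k))$. Since the depth bound is preserved up to a constant factor (hence still $O(h(k))$ for some computable $h$), $D$ is a logtime-E-uniform witness in $\para\AC^{0\uparrow}$ for the same problem.

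The statement for $\para\AC^{0}$ follows by the same two arguments with $d(k) = O(1)$: the $O(d(k))$ blow-up in the depth disappears, so the resulting logtime-E-uniform family has constant depth and thus lies in $\para\AC^{0}$. I expect the only subtle point to be the first direction, where I must make sure that reading ``$p$ is a single index'' from a list-encoded $p$ and converting it to the form expected by the logtime-E machine does not exceed $O(\log n + g(k))$ time; this is guaranteed by the logtime-computable projection and length functions discussed in \cref{subsec:lists}, so no new machinery is required.
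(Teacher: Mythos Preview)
Your proposal is correct and follows essentially the same route as the paper: the paper obtains the $E\subseteq D$ inclusion by observing that a logtime-$E$-uniform family is ``by definition'' also logtime-$D$-uniform, and the $D\subseteq E$ inclusion by passing through linear-BD-uniformity via \cref{thm:uniform-para-ac0^} and then applying \cref{lem:linear-BD-implies-logtime-E}. Your only deviation is that you spell out the reencoding from a single index $p$ to a one-step path $\langle p\rangle$ rather than absorbing it into ``by definition''; this is a reasonable bit of extra care but not a different argument.
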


We can also define \FO-E-uniformity as $(L_E(C), \pi_4)$ being $\para\FO$.
By the same proof as \cref{lem:logtime-d-implies-fo-d},
it can then be shown that logtime-E-uniformity implies \FO-E-uniformity
for parameterized circuit families of size $O(f(k)n^c)$,
and thus
(because $\FO$-E-uniformity implies \FO-D-uniformity,
and by then applying \cref{thm:uniform-para-ac0^})
\FO-E-uniform $\para\AC^{0\up}$ is equivalent with logtime-D-uniform $\para\AC^{0\up}$.
The same also holds for $\para\AC^0$.

\section{Relation to descriptive complexity}\label{sec:descriptive}
In this section we discuss a descriptive classification of logtime-uniform $\para\AC^{0\up}$
mirroring the equivalence of $\FO$ and logtime-uniform $\AC^0$ in \cref{thm:uniform-ac0}.

For non-parameterized circuits,
Barrington and Immerman~\cite{barrington_time_1994} prove that for sufficiently constructible polynomially bounded functions $t$,
logtime-D-uniform and $\FO$-D-uniform $\AC[t(n)]$ are equal to the class $\FO[t(n)]$ of languages decided by iterated first order sentences.

A similar observation can be made in the parameterized setting.
Define $\para\AC[f(k)]$ in terms of parameterized circuit families of size $O(f(k)n^c)$ and depth $O(f(k))$,
so that $\para\AC^{0\up}$ is the union of $\para\AC[f(k)]$ over the computable functions $f$.
Based on the definition of $\FO[t(n)]$ in \cite[Definition 4.24]{immerman_descriptive_1999},
we define $\para\FO[f(k)]$ as follows.

Write $(\forall x. \phi)\psi$ for $(\forall x)(\phi(x) \implies \psi)$
and $(\exists x. \phi)\psi$ for $(\exists x)(\phi(x) \land \psi)$.
A \defn{quantifier block} is a partial formula of the form $(Q_0 x_0. M_0)(Q_1 x_1. M_1) \dots (Q_m x_m. M_m)$,
where the $Q_i$ are quantifiers, the $x_i$ are (not necessarily distinct) variable letters, and the $M_i$ are quantifier-free formulas in the language of $\FO$.

\begin{definition}\label{def:FO[f(k)]}
	A parameterized problem $(Q, \kappa)$ is in $\para\FO[f(k)]$ if there is
	a computable $g$,
	a quantifier block $M$,
	a quantifier-free formula $\psi$,
	and a set of constants $c$
	such that
	\begin{align*}
		y \models M^{f(\kappa(x))}\psi(c) \Longleftrightarrow y \in \{ \langle x, g(\kappa(x)) \rangle : x \in Q \}
	\end{align*}
	where $M^{f(\kappa(x))}\psi(c)$ stands for the sentence obtained
	by concatenating $f(\kappa(x))$ copies of $M$ with $\psi$,
	and then substituting $c_i$ for free occurrences of variables $x_i$.
\end{definition}

Intuitively,
a parameterized problem $(Q, \kappa)$ is in $\para\FO[f(k)]$ if it is first-order describable in $|x|,f(\kappa(x))$ and iterations of length $f(\kappa(x))$,
that is, if $x \in Q \Longleftrightarrow x \models M^{f(\kappa(x))}\psi(c)$ holds.
That this corresponds with \cref{def:FO[f(k)]} can be made precise by adding constant symbols for $|x|$ and $f(\kappa(x))$ to the language,
and extending the domain of $x$ from $|x|$ to $\max(|x|, f(\kappa(x)))$.

Following the proof strategies in \cite{barrington_time_1994} (see also \cite{immerman_descriptive_1999}) it follow that, writing $\C$ for the computable functions $\N \to \N$:
\begin{align*}
	\text{\FO-D-uniform $\para\AC^{0\up}$}
	= \bigcup_{f \in \C} \text{\FO-uniform $\para\AC[f(k)]$}
	= \bigcup_{f \in \C} \para\FO[f(k)],
\end{align*}
thus linking logtime-D-uniform $\para\AC^{0\up}$ to a descriptive complexity class.
Following \cite{barrington_time_1994,immerman_descriptive_1999} further leads to an alternative, descriptive complexity-based proof for \cref{thm:uniform-para-ac0^}.

\section{Conclusion}
We have shown that, for $\para\AC^0$ and $\para\AC^{0\up}$,
various uniformity conditions found in ordinary circuit complexity are again equivalent in the sense that the respective induced complexity classes are equal.
This can be used to repair uniformity gaps in the literature.

For example,
while our result does not directly demonstrate that the circuit family  
mentioned in \cite[Theorem 3.2]{bannach_fast_2015}
is logtime-D-uniform as claimed, it does prove that there is a logtime-D-uniform circuit family equivalent with it.
Since these circuits are described in terms of additions, multiplications, and modulo operations of numbers
below a polynomial in $n$ and parameter $k$,
it is straightforward to show that the circuit family is \FO-D-uniform, and so, by \cref{thm:uniform-para-ac0^},
there is an equivalent logtime-D-uniform family.

We obtain our equivalences via the stricter parameterized linear-uniformity condition,
proving a substitution lemma
and then explicitly constructing layered linear-uniform families for given $\FO$-uniform families.
In a sense this can be seen as a direct, circuit-oriented variant of the approach in \cite{barrington_time_1994}.
Here they prove the equivalence of (direct, non-parameterized) uniformity conditions
for depth-$t(n)$ circuits via the descriptive complexity class $\FO[t(n)]$,
constructing layered logtime-uniform circuits deciding $\FO[t(n)]$-formulas.

Our results show the equivalence between the uniformity conditions not on the circuit family level
but only on the complexity class level.
This suffices for theoretical applications like \cite[Theorem 3.2]{bannach_fast_2015} mentioned above,
but is not fine-grained enough for questions of work efficiency.
For example, the proof of \cref{thm:uniform-para-ac0^} gives
for an \FO-D-uniform circuit family of size $f(k)n^c$
a logtime-D-uniform circuit family of size greater than $f(k)n^{3c+1}$.
This gap limits the study of open problems in fine-grained parallel complexity.
It would therefore be interesting to study the inherent cost of the conversion,
and to also consider which uniformity condition is the most natural one from a (parameterized) parallel complexity perspective.

\bibliographystyle{plainurl}
\bibliography{references}

\newpage
\appendix

\section{Beyond \texorpdfstring{$\FO$}{FO}-computable parameters}\label{sec:kappa-fo}
So far we have worked with the definition of $\para\FO$ in the spirit of Flum and Grohe (cf. \cref{def:Para}),
where
$(Q, \kappa)$ is in $\para\FO$ if and only if
$\{ \langle x, f(\kappa(x)) \rangle : x \in Q \}$ is in $\FO$.
Intuitively, this definition leaves it up to the $\FO$-formula
to decide for given input $\langle x, y \rangle$
whether $y$ codes the expected precomputation $f(\kappa(x))$.
Assuming $\kappa$ is $\FO$-uniform, this can be done:
for any computable $f$ there is a computable $g$ and a formula $\phi$
so that $\pi_0(g(\kappa(x))) = f(\kappa(x))$
and $\phi$ decides $y = g(\kappa(x))$ from $y$ and $\kappa(x)$.
(For example, $f$ can code a computation of $g(\kappa(x))$.)
This is why it is helpful to assume that $\kappa$ is $\FO$-computable.

The assumption on $\kappa$ can be weakened by slightly modifying the definition of $\para\FO$,
similar to the modification of $\Para(\AC^0)$ in \cite[Proposition 6(ii)]{chen_lower_2019}.
Say that
$(Q, \kappa)$ is in $\p\FO$ if and only if
there is an $\FO$-formula $\phi$ with
$x \in Q \Longleftrightarrow \langle x, f(\kappa(x)) \rangle \models \phi$.
Then the results in this paper hold for $\p\FO$ in place of $\para\FO$,
not just for $\FO$-computable $\kappa$ but
for all computable $\kappa$ with $|\kappa(x)| \leq O(n^{O(1)})$.
An alternative definition for $\pFO$
(equivalent up to the choice of $f$)
would be to add constant symbols for $|x|$ and $f(\kappa(x))$ to the language,
and extending the domain of inputs $x$ to $\max(|x|, f(\kappa(x)))$.

Technical details can be found in Appendix~\ref{app:pfo}.

\section{Technical appendix}\label{app:proofs}
\subsection{List encoding}
We verify that the encoding in \cref{def:list} can be worked with in linear time,
in random-access logarithmic time,
and in first order logic with ordering and $\bit$.
In particular,
that it can be checked whether a string encodes a list,
that the lengths of list items can be extracted,
and that the projection functions can be computed,
provided that the relevant item is not too long
(a logtime-machine cannot retrieve items of superlogarithmic length).

We focus on the linear and logarithmic time machines.
For \FO,
the claims are relatively easy to verify directly,
or using $\RTIME(\log n) \subseteq \FO$
\cite[Proposition 7.1]{barrington_uniformity_1990}.

To show that the linear-time machines can work with the encoding,
it suffices (roughly) to show that,
given some number $M$ in binary,
say the length $|x_i|$ of a list item,
a linear-time machine can count down to 0 in order to copy exactly the $|x_i|$ bits of $x_i$,
or to move exactly $|x_i| + 2$ bits in order to reach the start of the encoding of $|x_{i+1}|$.

The strategy for logtime-machines is similar,
the difference being that they do not move their reading head but instead offset
the address written (in binary) on their query tape.
Barrington et al. touch on this in \cite[Lemma 6.1]{barrington_uniformity_1990}
(our list encoding is similar to theirs).
We stress here that querying a block of $M = O(\log n)$ bits does not take too much time:
counting in binary from $N$ to $M$ on the query tape,
querying in-between,
takes $O(M)$ time.

In summary,
that both machines
can work with the list encoding follows from the following.

\begin{lemma}\label{lem:counting}
	A multitape machine can count in binary from $N$ to $N+M$ in $O(M)$ time.
\end{lemma}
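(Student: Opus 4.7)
The plan is to apply the classical amortized analysis of the binary counter, adapted to the multitape model. First, I would fix a concrete incrementing procedure: the machine keeps the current value written in binary on a dedicated work tape, with the least significant bit under the head and the more significant bits extending in one fixed direction (say leftward). Each increment rippes carries leftward: read the current bit; if it is $0$, flip it to $1$ and terminate; if it is $1$, flip it to $0$, move one cell leftward, and repeat; if the head falls off the high end of the counter, write a new $1$. After the increment the head is walked back to the least significant position so the next increment can begin identically.

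The second step is to bound the total time. The actual cost of the $i$th increment is $O(1+t_i)$, where $t_i$ is the number of trailing $1$s in the current value: $t_i$ ones are flipped to $0$, one $0$ is flipped to $1$, and the head traverses the same $t_i$ cells a second time to return to the low end. Using the standard potential $\Phi = $ (number of $1$-bits on the counter tape), an increment changes $\Phi$ by exactly $1 - t_i$, so its amortized cost is $(1+t_i) + (1-t_i) = O(1)$. Summing over the $M$ increments yields total amortized cost $O(M)$, hence total actual cost at most $O(M) + \Phi_0 - \Phi_M \leq O(M + \log(N+M))$.

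The main obstacle is exactly this leftover initial-potential term: if one allows a pathological start such as $N = 2^k - 1$, the very first increment must rewrite all $k = \log N$ bits, so a bound truly independent of $N$ is impossible. I would therefore make explicit that the lemma is intended in the regime $\log N = O(M)$, which matches every application in the paper (in \cref{subsec:lists} we use $M = O(\log n)$ starting from an address $N \leq n$ on a logtime machine's query tape, and in the linear-time use $N$ is the position of the read head within an input of length $O(M)$), and record the general form $O(M + \log N)$ as a remark. Under this hypothesis, the amortized bound above collapses to $O(M)$, completing the argument.
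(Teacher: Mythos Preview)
Your argument is essentially the same as the paper's: both amortize the cost of binary increments, you via the potential $\Phi = \#\text{ones}$, the paper via the direct geometric-series count (bit $i$ carries roughly once every $2^{i+1}$ increments). These are the two standard presentations of the same analysis.

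Two small differences are worth noting. First, the paper explicitly uses a second work tape to count \emph{down} from $M$ to $0$ so the machine knows when to stop, and observes that this countdown is $O(M)$ by the same amortized argument; you leave this implicit. Second, and more interestingly, you are more careful than the paper: you correctly isolate the residual $\Phi_0 = O(\log N)$ term and observe that the stated $O(M)$ bound literally requires $\log N = O(M)$, which the paper's sketch glosses over. Your check that this hypothesis holds in each application is a genuine improvement in rigor over the original.
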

\begin{proof}[Proof sketch]
	It is straightforward to show that the total number of carry bits
	forms a geometric series
	(the least significant bit has a carry every two successor operations,
	the one after that every four, etc.),
	and so it amortizes to $O(1)$ carries per successor operation.
	To keep track of the number of successor operations to carry out,
	count down from $M$ to 0 on a separate work tape,
	which by a similar argument takes $O(M)$ time as well.
\end{proof}

\subsection{Uniformity implications}

\linearBDImpliesLogD*
\begin{proof}
	We have to show that $(L_{BD}(C), \pi_4) \in \DTIME(n + f(k))$
	implies that there is some computable $g$ with $(L_D(C), \pi_4) \in \RTIME(\log n + g(k))$.
	We show that there is an $\RTIME(\log n + g(k))$-machine
	that converts its input into the corresponding input for the $\DTIME(n + f(k))$-machine
	and then simulates it to decide $L_{BD}(C)$.

	Let $M$ be a $\DTIME(n + f(k))$ machine deciding $L_{BD}(C)$.
	Assume without loss of generality that $f(k) \geq k$ for all $k$,
	and fix a constant $u$ such that $2^{u(\log n + f(k))}$
	is an upper bound to the admissible numbering of $C$.

	A random-access Turing machine decides $L_D(C)$ as follows.
	It rejects inputs not of the form
	$\langle G, a, p, z, z' \rangle$.
	If the input is of that form,
	it first reads $n = |z|$ and $k = |z'|$.
	It then reads $|G|$, $|a|$, and $|p|$,
	and rejects if any of these lengths exceed $u(\log n + f(k))$.

	It now writes $y = \langle G, a, p, n, k \rangle$ on a scratch tape.
	By merit of the previous checks,
	$y$ has length at most $O(\log n + f(k))$,
	and so this takes at most $O(\log n + f(k))$ time.
	Finally,
	it acts as $M$ on the scratch tape,
	simulating $M$ with input $y$.
	By assumption on $M$,
	this takes $O(|y| + f(k)) = O(\log n + f(k))$ time,
	and $M$ accepts if and only if $\langle G, a, p, n, k \rangle \in L_{DB}(C)$,
	i.e. if and only if $\langle G, a, p, z, z' \rangle \in L_D(C)$.
	The whole procedure runs in $O(\log n + g(k))$ time for some computable $g$.
\end{proof}

\logtimeLinearBDACn*
\begin{proof}
	Let $M$ be a random-access Turing machine of runtime $t(n,k) = a\log n + f(k)$.
	Without loss of generality,
	$M$ does not read from (move its head on) the input tape,
	but only accesses the input via queries:
	it cannot read beyond the first $t(n,k)$ bits by moving its head across the (read-only) input tape,
	so we may just as well begin the computation with queries to copy the first $t(n,k)$ bits of the input to a work tape
	and then treat that as the input tape.
	By \cref{lem:counting},
	the copying (querying while counting up to $t(n,k)$ on the query tape)
	takes $O(\log n + f(k))$ time.
	From hereon, we again assume that $t(n,k)$ is a runtime upper bound.

	Fix a binary encoding of the set $R = \{ 0, 1, \bot \}$ in $\{0,1\}^2$,
	where we interpret 1 and 0 as query responses and $\bot$ as an out-of-bounds response.
	There is a non-random-access multitape Turing machine $M'(r)$ that,
	if given a string $r = r_0r_1\dots r_{t(n,k)-1}$ with each $r_i \in R$,
	acts as $M$ while answering queries based on $r$:
	the $i$th query receives response $r_i$.
	It runs at most $t(n,k)$ steps of $M$ and returns 1 if $M$ returns 1 and 0 otherwise.
	This takes $O(t(n,k))$ time, without random access.

	There is a linear-BD-uniform parameterized circuit family $C$
	that on input $x$ computes $M(x)$ by essentially running $M'(r)$ for all $r \in R^{t(n,k)}$ in parallel,
	and selecting the output where $q$ corresponds with the correct query responses.
	It looks as follows.

	The final gate of $C_{n,k}$ is an \gor/ gate that takes $2^{2t(n,k)} = O(2^{f(k)}n^{2a})$ \gand/ gates,
	each of which takes $t(n,k)+1$ inputs.
	Reading $r$ of length $2t(n,k)$ as a list $r_0 \dots r_{t(n,k)-1}$, each $r_i \in \{0,1\}^2$,
	the $m$th input of the $r$th \gand/ gate for $m < t(n,k)$ is:
	\begin{align*}
		\begin{cases}
			\text{input gate $x_i$} & \text{if the $m$th query made in $M'(r)$ is $i$ and $i < n$ and $r_m = 1$;}\\
			\text{a circuit for $\lnot x_i$} & \text{if the $m$th query made in $M'(r)$ is $i$ and $i < n$ and $r_m = 0$;}\\
			\text{a circuit for constant 0} & \text{if the $m$th query made in $M'(r)$ is $i$ and $i < n$ and $r_m = \bot$;}\\
			\text{a circuit for constant 1} & \text{if the $m$th query made in $M'(r)$ is $i$ and $i \geq n$ and $r_m = \bot$;}\\
			\text{a circuit for constant 0} & \text{if $r_m \notin R$.}
		\end{cases}
	\end{align*}
	The final input for the $r$th \gand/ gate is a circuit for constant 1 if $M'(r) = 1$
	and a circuit for constant 0 otherwise.
	By construction, the output of $C_{n,k}(x)$ is 1 if and only if $M(x) = 1$.

	The circuit family has constant depth and size $O(t(n,k)2^{2t(n,k)}) = O(2^{f(k)}n^{2a+1})$.
	Because $M'$ runs in linear time,
	it is straightforward to check that $C$ is linear-BD-uniform.
\end{proof}

\subsection{Substitution}
\substitutionLemma*
\begin{proof}
	It suffices to define an admissible numbering for $D= A[B/M]$,
	and show that,
	under this numbering,
	$(L_{BD}(D), \pi_4)$ is in $\DTIME(n + f(k))$ for some computable $f$.

	We number the gates of $D_{n,k}$ as follows:
	unmarked internal gates $G$ in $A_{n,k}$ are numbered $\langle 0, G \rangle$ in $D_{n,k}$.
	If $G$ is a marked gate and $G'$ is a gate in the circuit $B_{\fanin(G,n,k),k}$ replacing it,
	we number it $\langle G, G' \rangle$ in $D_{n,k}$.
	The numbering of $D$ is thus based on the numberings of $A$ and $B$,
	and from this it can be shown that our numbering is admissible cf. \cref{def:numbering}.
	The input gates of $D_{n,k}$ must be numbered from $0$ to $n-1$,
	and the output gate must be numbered $n$.
	For simplicity, we instead pretend to number them from $\langle 0, 0 \rangle$ to $\langle 0, n \rangle$.

	We now describe an algorithm that decides $(L_{BD}(D), \pi_4)$
	in $O(n + f(k))$ time on a multitape Turing machine,
	for some computable function $f$. 

	It first rejects inputs $w$ not of the form
	$\langle \langle G, G' \rangle, a, p, n, k \rangle$.
	It further rejects if $G = 0$ and $M(G',n,k) = 1$,
	or if $G \neq 0$ and $M(G',n,k) = 0$:
	if $G = 0$ then $G'$ should not be a marked gate,
	and if $G \neq 0$ then $G$ should be a marked gate.
	By assumption on $M$, this does not take too much time.
	The remaining inputs are handled
	by reducing to a finite number of queries to $L_{BD}(A)$ and $L_{BD}(B)$,
	which can be done in the given time because $A$ and $B$ are linear-BD-uniform.
	They also involve a finite number of computations of $M$ and $\fanin$,
	which by assumption can also be done in the given time.
	The reductions are as follows.

	If $p = \epsilon$ then $w$ is a type query.
	If $G = 0$, then the type of $\langle G, G' \rangle$
	is that of $G'$, so $w$ is accepted if and only if $\langle G', a, p, n, k \rangle \in L_{DB}(A)$.
	If $G \neq 0$, the input is accepted if and only if $\langle G', a, p, n, k \rangle \in L_{DB}(B)$,
	unless $G'$ is an input or output gate,
	in which case they now have the \gor/ type.
	So if $G \neq 0$ and $G' \leq \fanin(G,n,k)$ then $w$ is accepted if and only if $a$ signifies \gor/.

	We now turn to the case $p \neq \epsilon$.
	Then $w$ is a connection query
	that must be accepted if and only if the $p$th input of gate $\langle G, G' \rangle$ is gate $a$.
	The algorithm first rejects if $a$ is not of the form $\langle H, H' \rangle$
	with $H = 0 \iff M(H',n,k) = 0$,
	i.e. if $a$ cannot possibly be a gate in $D_{n,k}$.
	Then there are some case distinctions.

	If $G = H = 0$ then we are dealing with two unmarked gates (or not gates at all),
	and so $w$ is accepted if and only if $\langle G', H', p, n, k \rangle \in L_{BD}(A)$.
	If $G = H \neq 0$,
	then this concerns the internals of a circuit of $B$,
	and $w$ is accepted if and only if $\langle G', H', p, n, k \rangle \in L_{BD}(B)$.

	If $G = 0$ and $H \neq 0$,
	then $w$ is accepted if and only if $H$ is the $p$th input of $G'$ in $A_{n,k}$
	and $\langle H, H' \rangle$ is the output gate of the circuit of $B$ replacing $H$,
	i.e. if and only if $\langle G', H, p, n, k \rangle \in L_{BD}(A)$ and $H' = \fanin(H,n,k)$.

	If $G \neq 0$ and $H = 0$,
	then $w$ is accepted if and only if
	$H'$ is the $p$th input of $G$ in $A_{n,k}$
	and $G'$ is the $p$th input gate of the circuit of $B$ replacing $G$,
	i.e. if and only if $\langle G, H', p, n, k \rangle \in L_{BD}(A)$
	and $G' = p < \fanin(G,n,k)$.

	If $G \neq 0$ and $H \neq 0$,
	then $G'$ should be the $p$th input gate and $H'$ the output gate
	of the circuits of $B$ replacing $G$ and $H$,
	and $H$ should feed into $G$ in $A_{n,k}$.
	So then $w$ is accepted if and only if
	$\langle G, H, p, n, k \rangle \in L_{BD}(A)$
	and $G' = p < \fanin(G,n,k)$
	and $H' = \fanin(H,n,k)$.

	It can be verified that the algorithm is correct
	and runs in time $O(|w| + f(k))$ for some computable $f$,
	thus proving that $D = A[B/M]$ is linear-BD-uniform.
\end{proof}

\subsection{Extended uniformity}
\linearBDImpliesLogtimeE*
\begin{proof}
	The proof is similar to those of \cref{lem:linear-bd-implies-log-d,lem:fo-d-implies-linear-bd},
	in that $C$ will be made up of layers of simgates.
	We make sure that the number of simgates per layer is exactly a power of 2,
	so that the legality of a step between layers depends only on its length.
	Because our list encoding also contains the length of each element
	(cf. \cref{def:list}),
	most steps can be performed by only reading the length instead of the step itself.

	\proofsubparagraph{The circuit family}
	Assume without loss of generality
	that the size and admissible numbering of $A$ is bounded by $f(k)n^c$,
	and its depth by $d(k)$.
	Write $L(n,k)$ for $f(k) + (c+1)\log n$.
	Then $N(n,k) = 2^{L(n,k)} - 1$ is an upper bound to the admissible numbering of $C$,
	and $x \leq N(n,k)$ if and only if $|x| \leq L(n,k)$.

	As in the proof of \cref{lem:linear-bd-uniform-para-ac0=logtime-d-uniform-para-ac0},
	we let $C$ consist of $d(k)$ layers of simgates of exactly $N(n,k)$ simgates each,
	together with a subcircuit to propagate the output of the correct simgate of the final layer.
	Because $A$ is linear-BD-uniform,
	the query blocks are easy to fill in:
	the ``does $q$ take $p$'' blocks can be implemented using \gor/s over $L_{BD}(A)$ queries,
	and the $L_{BD}(A)$ queries can be implemented using constant circuits,
	as $L_{BD}(A)$ queries can be computed in $O(\log n + g(k))$ time for some computable $g$,
	as in the proof of \cref{lem:linear-bd-implies-log-d}.

	We must now be careful in ordering our internal gate inputs in \cref{fig:simgate},
	to make tracing paths tractable.
	We let the first $p \leq N(n,k)$ inputs of gates $\langle m, q, 8 \rangle$ and $\langle m,q,9 \rangle$
	to be $\langle m, q, 12, p \rangle$ and $\langle m, q, 13, p \rangle$ respectively,
	and the remaining $N(n,k) + 1 + i$ for $i < n$ inputs to be gates $\langle m,q,10,i \rangle$ and $\langle m, q, 11, i \rangle$ respectively.

	A gate in $C_{n,k}$ has at most $N(n,k) + n$ inputs,
	meaning that all steps in valid paths in $C_{n,k}$ have length at most $L(n,k) + \log n + 1$.
	Furthermore,
	every valid path contains at most one step of length greater than $L(n,k)$,
	namely the one from gates $\langle m, q, 8 \rangle$ or $\langle m,q,9 \rangle$
	leading to an input gate.
	Finally, because each simgate has constant depth,
	there is some computable $D(k) = O(d(k))$ so that
	all valid paths in $C_{n,k}$ have length at most $D(k)$.

	\proofsubparagraph{Uniformity}
	By a similar proof as in \cref{lem:fo-d-implies-linear-bd},
	it can be shown that $C$ is logtime-D-uniform.
	(It is even linear-BD-uniform.)
	It remains to show that there is an $O(\log n + f(k))$ algorithm for
	recognizing correct paths.

	The algorithm works as follows.
	On input $x$,
	it first rejects if $x$ is not of the form $\langle G, a, p, z, z' \rangle$.
	Let $n = |z|$ and $k = |z'|$.
	Then the input is further rejected if $G$ or $a$ are not gates in $C_{n,k}$.

	Because $p$ is itself part of a list,
	every bit in $p$ is doubled on the tape itself.
	This adds a constant multiplicative overhead
	to the operations that will be performed
	(in particular, addresses need to be doubled),
	that we will ignore in our description below.

	The algorithm will treat $p$ as a valid list encoding,
	and reject if it ever finds that it is not:
	it will reject if any element length is greater than $L(n,k) + \log n + 1$,
	or if after $D(k)$ steps the end of $p$ has not yet been reached.

	It will then, starting from $G$, trace the path $p$
	until one of the rejection conditions above is reached,
	a dead end is reached but $p$ still continues (leading to rejection),
	or the end of $p$ is reached,
	keeping track of where it is in $C_{n,k}$
	to finally compare the end gate with $a$.

	We describe how this tracing can be performed for paths within the $d(k)$ layers of simgates.
	Paths starting in the final output propagation subcircuit can be traced in a similar way.
	We will not go into the internals of the query blocks,
	because they are of constant depth and thus,
	once a path enters one of the query blocks,
	the rest of it is easy to trace in $O(g(k) + \log n)$ time
	for some computable $g$.

	\proofsubparagraph{Path tracing}
	If $\langle m, q \rangle$ is a simgate,
	we call $m$ its \emph{layer} and $q$ its \emph{index}.
	The algorithm traces $p$ as follows.
	It iterates over the steps in $p$, keeping track of the following variables:
	\begin{itemize}
		\item the layer $m$ of the simgate it is in,
		\item an address $a_q$ of where the index of the simgate it is in is stored,
		\item an address $a_q'$ of where the index of the next simgate it may be in is stored,
		\item and the simgate-internal gate indicator $H$.
	\end{itemize}
	That is, if the path traced thus far has led to gate
	$\langle m', q', H' \rangle$ or $\langle m', q', H', v \rangle$,
	then $m = m'$ and $H = H'$, and the list element stored at $a_q$ is $q'$.
	The variables are initialized based on the starting gate $G$,
	and updated as the algorithm moves along $p$.

	Keeping track of $m \in [1, d(k)]$ takes $O(d(k))$ time in total,
	because it changes at most $d(k)$ times,
	and only when moving out of a simgate
	(when $H = 12$ and the next step is $0$).

	Keeping track of $H$ is done as follows.
	If the algorithm is currently on one of the many gates that take at most 3 inputs
	($H \notin \{ 8, 9 \}$),
	then the path is rejected if the length of the next step $p_i$ is greater than 2.
	Otherwise, the value of $p_i$ is read, and $H$ updated appropriately.
	In total, handling these cases also takes $O(d(k))$ time.

	The variables $a_q$ and $a_q'$ are kept track of as follows.
	When $H$ is 8 or 9 and the next step $p_i$
	satisfies $|p_i| \leq L(n,k)$,
	then $a_q'$ is set to $p_i$.
	When moving out of the simgate ($H = 12$),
	$a_q$ gets the value of $a_q'$.
	The values of $a_q,a_q'$ are updated at most $d(k)$ times,
	and the lengths of the addresses are bound by $O(\log(O(f(k) + \log n)))$,
	meaning that in total keeping track takes $O(d(k)(g(k) + \log\log n))$ time for some computable $g$.

	At the end of the path $p$,
	the algorithm constructs the precise gate it traced to
	by reading the value of $q$ stored at address $a_q$,
	and potentially the value $q'$ stored at address $a_q'$ (in case $H \in \{ 12, 13 \}$.)
	It then checks whether that gate is equal to the gate $a$ specified in the input.
	Doing this takes time $O(g(k) + \log n)$ for some computable $g$.

	While we do not address the final tracing in the query blocks (as discussed above),
	we will say a bit more about the case where the path leads out of the simgate to an input gate.
	That is, when $H$ is 8 or 9
	and the length of the next step $p_i$ is greater than $L(n,k)$
	(and smaller than $L(n,k) + \log n + 1$, else the path is rejected).
	The algorithm then reads $p_i$ and computes $p_i - L(n,k)$
	to determine which input gate branch the path leads to.
	This takes $O(g(k) + \log n)$ time for some computable $g$,
	but this happens at most once per path
	(as after this point it leads in constantly many steps to a query block or an input gate).

	\proofsubparagraph{Time}
	It remains to show that the procedure takes $O(\log n + g(k))$ time for some computable $g$.
	Above, we have skipped over the time it takes to read the lengths of the steps in $p$.
	The algorithm reads the lengths of at most $D(k)$ steps in $p$,
	and each length is written in a block of length
	$O(\log(L(n,k) + \log n + 1)) = O(\log f(k) + \log\log n)$ bits
	(as on longer blocks the input is immediately rejected).
	By \cref{lem:counting} (counting on the query tape and querying after every successor operation),
	this takes $O(g(k)\log\log n)$ bits for some computable $g$.
	Apart from reading the step lengths,
	keeping track of the variables $m,a_q,a_q',H$ takes, by the discussion above,
	also $O(g(k)\log\log n)$ bits for some computable $g$.
	Managing the finite number of special steps on a path that we did not elaborate on
	(query blocks, output propagating subcircuit)
	take $O(\log n + g(k))$ time each, for some computable $g$.

	It thus remains to show that $g(k)\log\log n$ is $O(\log n + h(k))$ for some computable $h$.
	That this holds for every combination of $n$ and $k$ follows from a case distinction:
	if $g(k) \leq \log\log n$ holds then $g(k)\log\log n$ is $O((\log\log n)^2)$ which is in $O(\log n)$,
	and if $g(k) > \log\log n$ holds then $g(k)\log\log n$ is $O((g(k))^2)$,
	both of which are $O(\log n + h(k))$ if we set $h(k) = (g(k))^2$.
\end{proof}

\subsection{Beyond \texorpdfstring{$\FO$}{FO}-computable parameters}\label{app:pfo}

Some changes have to be made in the proofs of our results to work with $\pFO$ and general computable parameter functions $\kappa$ instead of $\FO$ and $\FO$-computable $\kappa$.

In the proof of \cref{lem:logtime-d-implies-fo-d},
instead of our \cref{thm:chen},
the original formulation in \cite{chen_lower_2019} has to be used,
Proposition 6, implication $\text{(i)} \implies \text{(ii)}$.

For \cref{lem:fo-d-implies-linear-bd},
we now have to show that $(L_D(C), \pi_4) \in \pFO$
implies $(L_D(C), \pi_4)$ is in linear-BD-uniform $\para\AC^0$.
By the definition of $\pFO$ and \cref{thm:uniform-ac0,thm:regan-vollmer},
$(L_D(C), \pi_4) \in \pFO$ implies that there is a language $Q$ in linear-BD-uniform $\AC^0$
with
\[Q \cap L_D(C) = \{ \langle x, f(\kappa(x)) \rangle : x \in 2^* \}.\]
By \cite[Proposition 6, implication $\text{(ii)} \implies \text{(i)}$]{chen_lower_2019},
$L_D(C)$ is in linear-BD-uniform $\para\AC^0$.

We also prove the equivalence between the two definitions of $\pFO$.
Say that a problem $(Q, \kappa)$ satisfies (F1) if there is an $\FO$-formula $\phi$ and a computable $f$
with
\[Q = \{ x : \langle x, f(\kappa(x)) \rangle \models \phi \}.\]
Say it satisfies (F2) if there is an $\FO$-formula $\psi(N,F)$ and a computable $f$
so that
\[ Q = \{ x : x p(x) \models \phi(|x|, f(\kappa(x))) \}, \]
where $p(x)$ is some (unary) padding function with $|xp(x)| = \max(|x|, f(\kappa(x)))$.

It is straightforward to see that (F2) implies (F1):
taking a formula $\psi$ satisfying (F2),
a formula satisfying (F1) is obtained from it by
adding existential quantifiers to the front of it to capture $N$ as $|x|$ and $F$ as $f(\kappa(x))$,
bounding all quantifiers by replacing $(\forall i)\psi'$ with $(\forall i)((i \leq N \land i \leq F) \implies \psi')$
and $(\exists i)\psi'$ with $(\exists i)(i \leq N \land i \leq F \land \psi')$,
and finally (using quantifiers) replacing all $X(i)$ so that for $i < |x|$ the $i$ are offset with $\delta(|x|) + 2$ bits (cf. \cref{def:list}),
and for $|x| \leq i < f(\kappa(x))$ it reflects $p(x)$.

Going from (F1) to (F2) is slightly more involved,
as in (F1) the input length (and hence the domain) is greater than in (F1)
due to the list encoding overhead.
As $|\langle x, f(\kappa(x)) \rangle|$ is bounded by $B(x,k) = 4\cdot\max(|x|, f(\kappa(x)))$
and we may assume without loss of generality that $B(x,k) > \lceil \log B(x,k) \rceil > 2$,
we can modify an (F2)-formula to work with numbers up to $|\langle x, f(\kappa(x)) \rangle|$ on domain $\max(|x|, f(\kappa(x)))$ by using multiple variables for every number
and then bounding the quantifiers.

We address only how to use two variables to increase the effective domain in general.
Starting from an $\FO$-formula $\phi$,
define $r(\phi)$ recursively by $r(\psi \land \psi') = r(\psi) \land r(\psi')$,
and $r(\lnot \psi) = \lnot r(\psi)$.
For the quantifiers, let $r((\forall i)\psi) = (\forall i_0)(\forall i_1)r(\psi)$.
Finally, let $r(\bit(i,j)) = \bit'(i_0, i_1, j_0, j_1)$, where $\bit'(i_0,i_1,j_0,j_1)$ is defined as
\[
	\bit'(i_0,i_1,j_0,j_1) \iff
	((\bit(i_1, j_1) \land i_1 < |n| - 1) \lor (\bit(i_1, j_0) \land i_1 \geq |n| - 1 \land i_1 \leq 2(|n| - 1))).
\]
That is, the first $|n| - 1$ bits of our composite number are stored in $j_1$ and the second $|n| - 1$ bits are stored in $j_0$.
Now $r(\phi)$ can on word models of length $n$ work with numbers of $2^{2|n| - 2}$.
(There are additional technicalities with regards to $<$ and $X$, but they are resolved similarly.)

\end{document}